\theoremstyle:=definition,remark,plain\do{%
        \expandafter\g@addto@macro\csname th@\theoremstyle\endcsname{%
            \addtolength\thm@preskip\parskip
            }%
        }
\definecolor{dnrbl}{rgb}{0,0,0.3}
\definecolor{dnrgr}{rgb}{0,0.3,0}
\definecolor{dnrre}{rgb}{0.5,0,0}
\theoremstyle{plain}
\newtheorem{thm}{Theorem}[section]
\newtheorem{lem}[thm]{Lemma}
\newtheorem{defi}[thm]{Definition}
\newcommand{\Nat}{\mathbb{N}}
\newcommand{\restr}{\upharpoonright}  
\newcommand{\de}{\downarrow} 
\DeclarePairedDelimiter{\tuple}{\langle}{\rangle}
\newcommand{\bigo}[1]{\mathop{\bf O}\/\left({#1}\right)}
\newcommand{\smo}[1]{\mathop{\bf o}\/\left({#1}\right)}
\newcommand{\ml}{Martin-L\"{o}f }
\newcommand{\eg}{e.g.\ }
\newcommand{\ie}{i.e.\ }
\newcommand{\ce}{c.e.\ }
\newcommand{\lce}{left-c.e.\ }
\newcommand{\pf}{prefix-free }
\renewenvironment{abstract}
 { \normalsize
  \list{}{
    \setlength{\leftmargin}{.0cm}%
    \setlength{\rightmargin}{\leftmargin}%
    }%
  \item {\bf \abstractname.} \relax}
 {\endlist}
\newcommand{\pfm}{prefix-free machine }
\newcommand{\KG}{Ku\v{c}era-G\'{a}cs }
\newcommand{\pfn}{prefix-free}
\newcommand{\sz}{$\Sigma^0_1$ }
\newcommand{\CDFT}{Chalcraft, Dougherty, Freiling, and Teutsch }
\newcommand{\BDM}{Barmpalias, Downey, and McInerney }
\title{Lower bounds on the redundancy in computations from random oracles
via betting strategies with restricted wagers\thanks{Barmpalias was supported by the 
1000 Talents Program for Young Scholars from the Chinese Government, grant no.\ D1101130.
Additional support was received by
the Chinese Academy of Sciences (CAS) and the Institute of Software of the CAS.
Lewis-Pye was supported by a Royal Society University 
Research Fellowship. We thank the two anonymous referees and Peter G\'{a}cs for feedback that improved the
presentation of this article.}}
\author{George Barmpalias  \and Andrew Lewis-Pye \and Jason Teutsch}
\date{\today}
\begin{document}
\maketitle
\begin{abstract}
The \KG theorem \cite{MR820784,MR859105}
is a landmark result in algorithmic randomness asserting that every real is computable from a \ml random real. 
If the computation of the first $n$ bits of a sequence requires $n+h(n)$ bits 
of the random oracle, then $h$ is the {\em redundancy} of the computation.
Ku\v{c}era implicitly achieved redundancy $n\log n$ while G{\'a}cs  used a more elaborate coding procedure
which achieves redundancy $\sqrt{n}\log n$. A similar bound is implicit in the later proof
by Merkle and Mihailovi\'{c}
\cite{jsyml/MerkleM04}.
In this paper we obtain optimal strict lower bounds on the redundancy in computations from \ml random oracles.
We show that any nondecreasing  computable function $g$ such that 
$\sum_n 2^{-g(n)}=\infty$ is not a general upper bound 
on the redundancy in computations from \ml random oracles.
In fact, there exists a real $X$ such that the redundancy $g$ of any computation of $X$
from a \ml random oracle satisfies
$\sum_n 2^{-g(n)}<\infty$.
Moreover, the class of such reals is comeager and includes 
a $\Delta^0_2$ real as well as all weakly 2-generic reals. 
On the other hand, it has been recently shown in \cite{optcod16} 
that any real is computable from a \ml random oracle
with redundancy $g$, provided that $g$ is a computable nondecreasing function 
such that $\sum_n 2^{-g(n)}<\infty$. Hence our lower bound is optimal, and 
excludes many slow growing functions such as $\log n$ from bounding the redundancy in 
computations from random oracles for a large class of reals.
Our results are obtained as an application of a theory of effective
betting strategies with restricted wagers which we develop.
\end{abstract}
\noindent{\bf George Barmpalias}\\
\noindent State Key Lab of Computer Science, 
Institute of Software, Chinese Academy of Sciences, Beijing, China.
School of Mathematics, Statistics and Operations Research,
Victoria University of Wellington, New Zealand. 
\textit{E-mail:} \texttt{\textcolor{dnrgr}{barmpalias@gmail.com}}.
\textit{Web:} \texttt{\href{http://barmpalias.net}{http://barmpalias.net}}
\vfill
\noindent{\bf Andrew Lewis-Pye}\\
\noindent Department of Mathematics,
Columbia House, London School of Economics, 
Houghton St., London, WC2A 2AE, United Kingdom. 
\textit{E-mail:} \texttt{\textcolor{dnrgr}{A.Lewis7@lse.ac.uk.}}
\textit{Web:} \texttt{\textcolor{dnrre}{http://aemlewis.co.uk}} 
\vfill
\noindent{\bf Jason Teutsch}\\
\noindent Department of Computer and Information Sciences,
University of Alabama at Birmingham.\\
\textit{E-mail:} \texttt{\textcolor{dnrgr}{teutsch@uab.edu.}}
\textit{Web:} \texttt{\href{http://people.cs.uchicago.edu/~teutsch/}{http://people.cs.uchicago.edu/$\sim$teutsch}}
 \vfill\thispagestyle{empty}
\clearpage

\section{Introduction}
Every sequence is computable from a sequence which is random in the sense of \ml \cite{MR0223179}.
This major result in algorithmic information theory is known as the \KG theorem and was proved by
Ku\v{c}era \cite{MR820784, Kucera:87}
and G{\'a}cs \cite{MR859105}.
Both authors showed that the use of the oracle in these reductions can be
 bounded above by a computable function, but
Ku\v{c}era did not focus on minimizing the number of bits of the oracle that are needed to compute
the first $n$ bits of the sequence. If the latter number is $n+h(n)$, we say that the 
computation has {\em redundancy} $h$.
A close look at Ku\v{c}era's 
argument shows that his techniques achieve redundancy $n\log n$.
G{\'a}cs, on the other hand, took special care to minimise the oracle use. 
His argument produces a slightly more elaborate computation with redundancy 
$3\sqrt{n}\log n$, which can easily be improved to
$\sqrt{n}\log n$.
Both  of the arguments were formulated in terms of effective measure, \ie
according to the \ml definition of randomness.

The major difference between the results of 
Ku\v{c}era and G{\'a}cs is that the latter provides a reduction with oracle  use
$n+\smo{n}$ while the former does not. Merkle and Mihailovi\'{c}
\cite{jsyml/MerkleM04} presented a proof in terms of effective martingales, using similar ideas to
G{\'a}cs' proof but expressed in terms of betting strategies.
Up to now, the only known strict lower bound on the redundancy in computation from \ml random 
reals is the constant bound, and is due to
Downey and Hirschfeldt \cite[Theorem 9.13.2]{rodenisbook}.
Turing reductions
with constant redundancy are also known as {\em computably Lipschitz} or $cl$ reductions
and are well studied in computability theory, \eg see \cite[Chapter 9]{rodenisbook}.
Downey and Hirschfeldt  showed that
the redundancy in the \KG theorem cannot be $\bigo{1}$.
In fact, they constructed a sequence which is not computed with constant redundancy by any real whose Kolmogorov complexity
is bounded below by a computable nondecreasing unbounded function.
The reals with the latter property are sometimes known as {\em complex reals}. A close look at
this argument reveals that the set of reals which cannot be computed from any complex real with constant
redundancy is comeager. Moreover, it follows from the effective nature of the argument that: 
\begin{equation*}
\parbox{14.5cm}{\em a weakly 2-generic real cannot be computed by any complex real with
constant redundancy,}
\end{equation*}
where a real is called {\em weakly 2-generic} if it has a prefix in
every dense $\Sigma^0_2$ set of strings.

By \cite{BV2010} a real which is not complex has infinitely many initial segments of trivial complexity
in the sense that $C(X\restr_n)=C(n)+\bigo{1}$ and 
$K(X\restr_n)=K(n)+\bigo{1}$, where $K$ and $C$ denote the \pf and plain Kolmogorov complexities.
Sequences with the latter property are known as {\em infinitely often $C$-trivial and $K$-trivial} respectively.
It follows that any sequence computing a weakly 2-generic sequence
with constant redundancy is infinitely often $C$-trivial and infinitely often $K$-trivial.

\subsection{Our results, in context}
In Section \ref{V1dKIq77PY} we show that 
the redundancy in computations from \ml random oracles cannot be bounded by
certain slow growing functions. Recall that a real is $\Delta^0_2$
if and only if it is computable from the halting problem.

\begin{thm}\label{tbG7BLsZ}
There exists a real $X$ such that  $\sum_i 2^{-g(i)}<\infty$ for every nondecreasing 
computable function $g$ for which there exists a 
 \ml random real $Y$ which computes $X$ with redundancy $g$. 
  In fact, the reals $X$ 
 with this property form a comeager class which
includes every weakly 2-generic real.
\end{thm}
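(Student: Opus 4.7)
The plan is to exhibit, for each computable nondecreasing $g$ with $\sum_n 2^{-g(n)}=\infty$ and each Turing functional $\Phi$, a dense $\Sigma^0_2$ set of strings $D_{\Phi,g}$ such that every real $X$ with a prefix in $D_{\Phi,g}$ fails to be computable via $\Phi$ from any \ml random with redundancy $g$. Intersecting, over the countably many pairs $(\Phi,g)$, the families of reals meeting $D_{\Phi,g}$ then produces a comeager class. Weakly 2-generic reals automatically lie in this class, since by definition they meet every dense $\Sigma^0_2$ set of strings; a $\Delta^0_2$ example is obtained by the standard $\emptyset'$-oracle diagonalization that extends a stem to hit $D_{\Phi_e,g_k}$ as the pair $(e,k)$ is enumerated.

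To set up the reduction, I would associate to each $\sigma\in 2^{<\omega}$ the clopen class $P_\sigma$ of oracles $Y$ satisfying $\Phi^Y\restr_n=\sigma\restr_n$ with oracle use at most $n+g(n)$ for every $n\le|\sigma|$. Then $X=\Phi^Y$ with redundancy $g$ precisely when $Y\in\bigcap_n P_{X\restr_n}$. The purpose of demanding $\sigma\in D_{\Phi,g}$ is to arrange that $P_\sigma$ be covered by a \ml test uniformly in $\sigma$, so that the whole $\Pi^0_1(X)$-class $\bigcap_n P_{X\restr_n}$ avoids every \ml random and the desired computation is ruled out.

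The core technical input is the theory of restricted-wager betting strategies developed earlier in the paper. A restricted wager at stage $n$ is constrained in magnitude by a quantity comparable to $2^{-g(n)}$, and the divergence $\sum_n 2^{-g(n)}=\infty$ supplies enough cumulative betting power that, via a Borel--Cantelli-type argument for martingales, one can build a \ce martingale succeeding on every oracle that tracks $\sigma$ through $\Phi$ with use $n+g(n)$ to sufficient depth. Density of $D_{\Phi,g}$ is the principal obstacle: given any stem $\tau$ I must locate $\sigma\supseteq\tau$ long enough that the restricted-wager strategy has accumulated enough winnings to witness a \ml null cover of $P_\sigma$. I expect this to follow from a combinatorial argument exploiting the bound $\sum_{\sigma\in 2^n}|B_\Phi(\sigma)|\cdot 2^{-n-g(n)}\le 1$, where $B_\Phi(\sigma)=\{\rho\in 2^{|\sigma|+g(|\sigma|)}:\Phi^\rho\supseteq\sigma\}$: divergence of $\sum_n 2^{-g(n)}$ forces the wagers to accumulate enough capital on all but a measure-zero set of extensions, letting us find the required $\sigma$.

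Once density of each $D_{\Phi,g}$ is in hand, the rest is bookkeeping. The uniform \ce \ml test constructed from the restricted-wager martingale certifies that, for every $X$ extending any $\sigma\in D_{\Phi,g}$, the $\Pi^0_1(X)$-class $\bigcap_n P_{X\restr_n}$ contains no \ml random, so no \ml random $Y$ computes $X$ via $\Phi$ with redundancy $g$. Assembling across the countably many $(\Phi,g)$ pairs yields the comeager class, the inclusion of all weakly 2-generic reals, and the $\Delta^0_2$ example produced by $\emptyset'$-diagonalization against the enumeration of dense $\Sigma^0_2$ sets $D_{\Phi_e,g_k}$.
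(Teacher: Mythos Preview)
Your overall architecture matches the paper: for each pair $(\Phi,g)$ with $\sum_n 2^{-g(n)}=\infty$ and each constant $c$, one builds a dense $\Sigma^0_2$ set of strings (the paper's $S(e,c)$), and then weak 2-genericity, comeagreness, and the $\Delta^0_2$ example via $\emptyset'$-diagonalization all follow formally. Your preimage sets $B_\Phi(\sigma)$ and the bound $\sum_{\sigma\in 2^n}|B_\Phi(\sigma)|\cdot 2^{-n-g(n)}\le 1$ are exactly the ingredients of the paper's Lemma~\ref{znvNmmPcdY}: the function $h^\ast(\sigma)=|B_\Phi(\sigma)|\cdot 2^{-g(|\sigma|)}$ is a $g$-granular \ce supermartingale on the \emph{target} space.

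Where your sketch diverges from the paper, and where it becomes vague enough to count as a gap, is the mechanism that produces the dense extension. You describe a restricted-wager martingale that ``succeeds on every oracle that tracks $\sigma$'' and speak of it ``accumulating enough winnings'' to cover $P_\sigma$; this places the restricted-wager strategy on the oracle side and asks it to \emph{win}. The paper does the opposite. The granular supermartingale $h^\ast$ lives on the target side, and the point of Lemma~\ref{Bb4WohYKDQ} (the restricted-wager lemma) is to find, extending any stem $\nu_0$, a path $Z$ on which $h^\ast$ \emph{stays bounded} while a companion \ce supermartingale $N$ succeeds on $Z$. The first fact gives $|B_\Phi(Z\restr_n)|<2^{d+g(n)}$ for all $n$; the second makes $Z$ non-random, so $K(Z\restr_n)<n-c'$ for infinitely many $n$. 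One then builds a prefix-free machine that describes each $\mu\in B_\Phi(Z\restr_n)$ by a $U$-description of $Z\restr_n$ together with a $(d+g(n))$-bit index, yielding $K(\mu)<|\mu|-c$. That $\nu=Z\restr_n$ is the dense extension. Your proposal does not isolate this two-part move (bounded preimage count \emph{and} compressible target), and the phrase ``Borel--Cantelli-type argument'' does not substitute for it; without Lemma~\ref{Bb4WohYKDQ} there is no evident way to force both conditions simultaneously along a single extension of $\nu_0$.
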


This result implies that any nondecreasing computable function $g$ such that 
$\sum_i 2^{-g(i)}=\infty$ is not a general upper bound on the redundancy in computations
of reals from \ml random oracles. A typical function with this property is $\lceil \log n \rceil$, so
the \KG theorem does not hold with redundancy $\lceil \log n \rceil$.
On the other hand, if $g(n)=2\cdot \lceil \log n \rceil$ then $\sum_i 2^{-g(i)}<\infty$.
It was recently shown in \cite{optcod16} that any nondecreasing computable function $g$
with the latter property is a general upper bound  on the redundancy in computations
of reals from \ml random oracles. Hence Theorem \ref{tbG7BLsZ} is optimal and gives a characterization
of the computable nondecreasing redundancy upper bounds
 in computations
of reals from \ml random oracles. Note that the optimal bounds obtained in  \cite{optcod16} 
are exponentially smaller than the
previously best known upper bound  of  $\sqrt{n}\log n$ from 
G\'{a}cs \cite{MR859105}.

With slightly more effort we also obtain an effective version of 
Theorem \ref{tbG7BLsZ}, which gives many more examples of reals $X$
which can only be computed from random oracles with large redundancy.
Recall that the halting problem relative to $A$ is denoted 
$A'$. The generalized non-low$_2$ reals are an important and extensively studied class in the context of degree theory:  $A$ is generalized low$_2$ if $A''$ has the same Turing degree as 
$(A\oplus\emptyset')'$, and a
 set which is not  generalized low$_2$ is called
generalized non-low$_2$.  

\begin{thm}[Jump hierarchy]\label{cdUMjzgrN}
Every set which is generalized non-low$_2$ (including the halting problem) 
computes a real
 $X$ with the properties of Theorem \ref{tbG7BLsZ}.
\end{thm}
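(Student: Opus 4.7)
The plan is to effectivize the forcing construction underlying Theorem \ref{tbG7BLsZ} by combining it with the standard characterization (due to Jockusch) that $A$ is generalized non-low$_2$ if and only if $A$ computes a function $f$ that is not dominated by any $\emptyset'$-computable function. This ``hyperimmunity of $A$ relative to $\emptyset'$'' is precisely the degree-theoretic tool that lets an $A$-computable construction meet a sequence of dense $\Sigma^0_2$ requirements in Cantor space, and the reals $X$ produced by the proof of Theorem \ref{tbG7BLsZ} are exactly those that meet such a list.

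Concretely, I would first extract from the (not yet seen) proof of Theorem \ref{tbG7BLsZ} a countable family of dense sets of strings $\{D_e\}_{e\in\Nat}$ such that any $X$ meeting every $D_e$ satisfies the conclusion of that theorem. Each $D_e$ should correspond to one pair $(\Phi_e,g_e)$, where $\Phi_e$ is a Turing functional and $g_e$ is a nondecreasing computable function with $\sum_i 2^{-g_e(i)}=\infty$, and consists of those $\sigma$ that the restricted-wager martingale theory of the paper certifies as forcing the requirement ``no \ml random $Y$ computes an extension of $\sigma$ via $\Phi_e$ with redundancy $g_e$.'' Two properties need to be checked at this point: density of $D_e$ is witnessed at the $\Sigma^0_2$ level (an effectiveness check on the martingale construction), and each $D_e$ is upward closed under extension, so that a commitment to lie in $D_e$ is preserved by any further extension. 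Both are to be read off the restricted-wager framework built earlier in the paper.

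Once this is in hand, the construction proceeds as follows. The least-extension function $m(e,\sigma)\in D_e$ is $\Delta^0_2$, so there is a $\emptyset'$-computable bound $h(s)$ on the time needed, starting from any string of length at most $s$, to successively find extensions meeting $D_0,\dots,D_s$. Fix $f\leq_T A$ not dominated by $h$, and build $X$ as the limit of an $A$-computable sequence $\sigma_0\preceq\sigma_1\preceq\cdots$ by simulating this search for $f(s)$ many steps at stage $s$ and setting $\sigma_s$ to be the longest extension of $\sigma_{s-1}$ seen to lie in $D_0\cap\dots\cap D_e$ for the largest $e\leq s$ for which the search succeeds. Whenever $f(s)>h(s)$, which happens infinitely often, all of $D_0,\dots,D_s$ are met; since the $D_e$ are upward closed, the corresponding commitments are never later undone, so $X=\lim_s\sigma_s$ exists, lies in $\bigcap_e D_e$, and is computable from $A$. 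The main obstacle I anticipate is confirming the precise complexity bounds on $\{D_e\}$—that the density of each $D_e$ is genuinely $\Sigma^0_2$ rather than higher, and that the forcing relation is finitary enough for upward closure—both of which hinge on the effective content of the restricted-wager betting-strategy theory used to prove Theorem \ref{tbG7BLsZ}.
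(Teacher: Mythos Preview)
Your overall strategy matches the paper's: use the domination characterization of generalized non-low$_2$ and run an $A$-computable finite-extension construction meeting the dense conditions from Theorem~\ref{tbG7BLsZ}. Two gaps remain. The first is the obstacle you anticipate: your $D_e$ are dense only when $g_e$ is total with $\sum_i 2^{-g_e(i)}=\infty$, a $\Pi^0_2$ hypothesis $\emptyset'$ cannot decide, so the least-extension function is not obviously $\Delta^0_2$ and your bound $h$ is not obviously well-defined. The paper's fix is an Effective Density Lemma (Lemma~\ref{ImUc5BvB} and its uniform form~\eqref{2LHcslocWH}): there is a computable $f_{\ast}$ such that the $\Sigma^0_1$ condition ``$\exists k\,[\,g_e(i)\de$ for all $i\leq k$ and $\sum_{i\leq k}2^{-g_e(i)}>f_{\ast}(e,c,\nu_0)\,]$'' already guarantees a suitable extension of $\nu_0$ of length $k$. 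Thus $\emptyset'$ can test whether to act and, if so, locate the extension; if the test fails the requirement is vacuous. This is precisely the extra effective content of the restricted-wager theory you allude to.

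The second gap you do not flag: your $h(s)$ bounds searches only from strings of length at most $s$, but your $\sigma_{s-1}$ may already be much longer, so at a stage with $f(s)>h(s)$ the bound need not apply to the search from $\sigma_{s-1}$. The paper addresses this explicitly by defining $X$ one bit per stage, so that at stage $s+1$ one is working with $X\restr_s$ of length exactly $s$; the highest-priority requirement currently requiring attention steers the next bit toward its target extension $\nu$, and is declared satisfied only once $|\nu|$ is reached. The paper singles this length control out as essential. (A minor correction: the sets $S(e,c)$ are not upward closed as sets of strings---a longer $\nu'$ may have preimages $\mu'$ with $K(\mu')\geq|\mu'|-c$---but the open set of reals having a prefix in $S(e,c)$ is, and that is all your argument needs.)
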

The proof of Theorem \ref{tbG7BLsZ} also gives a nonuniform version of the latter result, requiring a weaker
condition regarding the computational power of the oracle.
Recall from \cite{DJS2} that a set $A$ is array noncomputable if for each function $f$
that is computable
from the halting problem with a computable upper bound on the oracle use, there exists a function
$h$ which is computable from $A$ and which is not dominated by $f$. A degree is array noncomputable if its members are. 
The class of array noncomputable degrees (again an extensively studied class) is an upwards closed  superclass of the generalized non-low$_2$ degrees, and includes low degrees amongst its members. 

\begin{thm}[Array noncomputability]\label{2NY34dfIMG}
Suppose that $\sum_i 2^{-g(i)}=\infty$ for some
computable nondecreasing function $g$.
Then every array noncomputable real 
computes a real $X$ 
which is not computable by any \ml real with redundancy $g$.
\end{thm}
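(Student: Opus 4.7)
The plan is to adapt the construction underlying Theorem \ref{tbG7BLsZ}, replacing its genericity/forcing argument with one that is effective relative to the array noncomputable oracle $A$. Fix the computable nondecreasing $g$ with $\sum_i 2^{-g(i)} = \infty$, and enumerate $\Phi_0,\Phi_1,\ldots$, the Turing functionals with oracle use bounded by $n + g(n)$ when computing the first $n$ output bits. The basic measure bound $\mu(\{Y : \Phi_e^Y \restr n = \sigma\}) \le 2^{-n}$ holds for every length-$n$ string $\sigma$, since at most $2^{g(n)}$ length-$(n+g(n))$ initial segments of the oracle can map under $\Phi_e$ to $\sigma$. Consequently, if for every $e$ we can cover $\{Y : \Phi_e^Y = X\}$ by a \ml test, no \ml random $Y$ computes $X$ with redundancy $g$.

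The construction of $X$ in Theorem \ref{tbG7BLsZ} proceeds in stages using the restricted-wager martingale framework of the earlier sections: the divergence $\sum 2^{-g(i)} = \infty$ guarantees that at each stage a suitable extension of $X$ exists, and yields a bound $\ell_s$ on the length of the extension needed to serve the requirements of priority $\le s$. The key observation is that $s \mapsto \ell_s$ is a $\Delta^0_2$ function with a computable upper bound on the oracle use: computing $\ell_s$ requires only finitely many queries to $\emptyset'$ about values of $\Phi_e$ and about finite enumerated portions of the \ml tests being constructed, all within a search space bounded a priori by a computable function of $s$. Array noncomputability of $A$ supplies an $A$-computable function $h$ not dominated by any such function, so $h(s) \ge \ell_s$ for infinitely many $s$. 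We run the construction $A$-effectively, using $h(s)$ as the search bound at stage $s$: when $h(s) \ge \ell_s$ the search succeeds and extends $X$ as required, while otherwise we apply a default extension (for instance, appending a single $0$) that causes no injury. Each requirement is eventually served at a successful stage, so the partially-built tests combine in the limit into genuine \ml tests covering $\{Y : \Phi_e^Y = X\}$ for every $e$, and $X \le_T A$ by construction.

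The principal obstacle is to organize the priority structure of the construction so that $s \mapsto \ell_s$ is genuinely of the $\omega$-c.e.\ form demanded by the definition of array noncomputability. This requires modifications to the construction of Theorem \ref{tbG7BLsZ}, ensuring that the combinatorial searches underlying each stage are bounded a priori by a computable function of $s$, and that stages at which $h(s) < \ell_s$ injure each fixed requirement only finitely often. Once these adjustments are in place, the verification follows the same pattern as in the comeager version.
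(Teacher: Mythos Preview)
Your plan is essentially the paper's own: fix $g$, observe that once $g$ is fixed the $\emptyset'$-computable ``search-bound'' function governing the finite-extension construction becomes weak-truth-table computable in $\emptyset'$, and then invoke array noncomputability of $A$ to get an $A$-computable function that escapes it infinitely often. The paper executes precisely this, by restricting the construction of Section~\ref{q4yEFszH5d} and the dominating function $p$ of Section~\ref{3RCakbwU1} to a single index $e$ with $g_e=g$; totality of $g$ is what makes the relevant $\Sigma^0_1$ searches have a computable a priori bound, so your identification of the ``principal obstacle'' is exactly right, and the paper resolves it in just the way you suggest.

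One point to correct: your opening claim that at most $2^{g(n)}$ strings of length $n+g(n)$ can map under $\Phi_e$ to a given $\sigma$ of length $n$, and hence that $\mu(\{Y:\Phi_e^{Y}\restr_n=\sigma\})\le 2^{-n}$, is false in general (a constant functional sends every oracle string to the same output). Fortunately you never actually use this bound: the implication ``if each $\{Y:\Phi_e^Y=X\}$ is covered by a \ml test then no random $Y$ computes $X$'' is immediate without it, and the construction itself, like the paper's, relies on the granular-supermartingale machinery (Lemma~\ref{U26qaoGM} via Lemma~\ref{znvNmmPcdY}) rather than on any crude measure estimate. Drop that sentence and your sketch lines up with the paper's argument.
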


The proof of all of the above results relies on an analysis of effective betting strategies
with restricted wagers. This is not entirely surprising as (a) \ml randomness can be expressed 
in terms of the success of effective martingales (see Section \ref{BHdbZ8A36P})
and (b) there is a direct connection between Turing reductions, semi-measures and martingales,
which goes back to Levin and Zvonkin \cite{MR0307889} (see the discussion before 
Section \ref{RW7DicsV4p}). 
A strategy (or martingale) is said to have restricted wagers when it can only bet amounts
from a given set of possible values, where this set of legitimate values may be allowed to vary
from stage to stage of the betting game. 
The subject of martingales with restricted wagers has been the focus of intense research activity recently. The simplest case
is when the restriction specifies only a  minimum amount that the gambler can bet at each stage.
Integer-valued martingales are examples of strategies of this type, and were
motivated and studied
by  Bienvenu, Stephan and Teutsch \cite{cie/BienvenuST10, BSTmartin}, 
\CDFT \cite{TeutChalcraft}, Teutsch \cite{Teu14agCWGnp}, \BDM \cite{JCSSbarmp15}
and most recently Herbert \cite{herbertlow}.
A more general study of betting strategies with restricted wagers
can be found in Peretz \cite{Peretzwager} and Bavly and Peretz \cite{Peretzagainst}.
Given a function $g$, a function on binary strings is called {\em $g$-granular} if
its value on any string $\sigma$ is an integer multiple of $2^{-g(|\sigma|)}$.
The notion of $g$-granular supermartingales is based on the above notion, and 
is a formalisation of the intuitive notion of betting strategies with restricted wagers.
We defer the formal definition until
Section \ref{RBq4QNezFK}, but state the following pleasing result now, which indicates their importance. 
Let $\lambda$ denote the empty string. 
The definition of c.e.\ supermartingales and 
other basic terms will be reviewed in Section \ref{BHdbZ8A36P}. 

\begin{thm}[Granular supermartingales]\label{Wab5YAaQ7s}
Suppose that $g$  is a nondecreasing and computable function.
\begin{enumerate}[\hspace{0.5cm}(a)]
\item If $\sum_i 2^{-g(i)}<\infty$, for every \ce supermartingale $N$ there exists
a $g$-granular \ce supermartingale $M$ such that  for each  $X$ we have
$\limsup_s M(X\restr_n)=\infty$ if and only if $\limsup_s N(X\restr_n)=\infty$.
\item If $\sum_i 2^{-g(i)}=\infty$, there exists  a \ce supermartingale
$N$ such that for all $g$-granular \ce supermartingales $M$ there exists some $X$ such that
$\limsup_s N(X\restr_n)=\infty$ and $\limsup_s M(X\restr_n)<\infty$.
\end{enumerate}
\end{thm}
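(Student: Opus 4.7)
The plan is to handle the two halves by quite different constructions. For (a), I would build $M$ directly from $N$ by rounding up $N$ to granularity $2^{-g(|\sigma|)}$ after adding a ``safety margin'' taken from the tail of the convergent series. Put $S_n = \sum_{i > n} 2^{-g(i)}$, a left-c.e.\ real tending to $0$, and set
\[
M(\sigma) = \bigl\lceil (N(\sigma) + S_{|\sigma|})\cdot 2^{g(|\sigma|)} \bigr\rceil \cdot 2^{-g(|\sigma|)}.
\]
This is $g$-granular by construction and left-c.e.\ since both $N(\sigma)$ and $S_{|\sigma|}$ are. The two-sided estimate $N(\sigma)+S_{|\sigma|} \le M(\sigma) \le N(\sigma)+S_{|\sigma|}+2^{-g(|\sigma|)}$ combined with the telescoping identity $S_n = S_{n+1} + 2^{-g(n+1)}$ gives $M(\sigma 0)+M(\sigma 1) \le 2N(\sigma) + 2S_{|\sigma|} \le 2M(\sigma)$, so $M$ is a supermartingale. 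Since $S_{|\sigma|}+2^{-g(|\sigma|)}\to 0$, we have $M(\sigma)-N(\sigma)\to 0$ and $M, N$ share the same success set.

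For (b), I would take $N$ to be a universal c.e.\ supermartingale, so that by the Levin--Schnorr theorem $\limsup N(X\restr_n)=\infty$ exactly on non-\ml random $X$. It then suffices to show that for every $g$-granular c.e.\ supermartingale $M$ there is a non-random $X$ on which $M$ stays bounded. The starting tool is the adversary strategy: at each step with $\sigma=X\restr_n$, pick $X(n)=i$ satisfying $M(\sigma i)\le M(\sigma)$, which exists by the supermartingale inequality and forces $M(X\restr_n)\le M(\lambda)$ for all $n$. Because $M$ is $g$-granular, $M(\sigma 0)$ and $M(\sigma 1)$ lie in a uniformly bounded set of multiples of $2^{-g(|\sigma|+1)}$, so each exact value is extractable from $\emptyset'$ by finite search and the routing is $\emptyset'$-computable.

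The hard part of (b) is that a $\Delta^0_2$ sequence need not be non-random (Chaitin's $\Omega$ is a counterexample), so the bare adversary strategy does not close the argument. Here I would invoke the divergence hypothesis: along the constructed $X$, the set $A$ of levels at which $M$ places a strictly positive bet satisfies $\sum_{n\in A}2^{-g(n+1)}\le 2 M(\lambda)$, since every such bet is a granular multiple of $2^{-g(n+1)}$, routing to the smaller side drops $M$ by at least half of it, and the total drop is bounded by $M(\lambda)$. Because $\sum_n 2^{-g(n+1)}=\infty$, the complementary ``passive'' levels have divergent $g$-weight, and at every passive level $M(\sigma 0)=M(\sigma 1)$, so the routing there is entirely free. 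I would exploit that freedom to trap $X$ inside a fixed null $\Pi^0_1$ class (for instance, by forcing $X(n)=0$ on a suitable $\emptyset'$-computable dense subset of passive levels), ensuring non-randomness. The delicate point is to interleave the forced active-level routing with the free passive-level steering, since the partition of $\mathbb{N}$ into active and passive levels is revealed only as $X$ is built; the divergence condition is precisely what guarantees that enough passive levels appear for the steering to succeed.
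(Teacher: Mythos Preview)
Your argument for (a) is correct and is essentially the paper's own proof: the paper also rounds $N(\sigma)$ to granularity $2^{-g(|\sigma|)}$ and adds the tail $\sum_{i>|\sigma|}2^{-g(i)}$, with the same telescoping identity $S_n=S_{n+1}+2^{-g(n+1)}$ absorbing the rounding error in the supermartingale inequality. (You round up after adding the tail, obtaining a strongly $g$-granular $M$; the paper rounds down and then adds the tail, obtaining a $g$-granular but not strongly granular $M$. Either variant works.)

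For (b) there is a real gap. Your structural observation is correct: routing to the non-larger child keeps $M(X\restr_n)\le M(\lambda)$, and by granularity every strict drop is at least $2^{-g(n+1)-1}$, so the active levels carry finite $g$-weight and the passive levels carry divergent $g$-weight. The problem is the step ``trap $X$ inside a fixed null $\Pi^0_1$ class by forcing $X(n)=0$ on an $\emptyset'$-computable subset of passive levels''. A class of the form $\{Y:Y(n)=0\ \text{for all}\ n\in S\}$ is $\Pi^0_1$ only when $S$ is c.e., but whether a level is passive, i.e.\ whether $M(X\restr_n\ast 0)=M(X\restr_n\ast 1)$, is only decidable from $\emptyset'$: the left-c.e.\ approximations $M_s(\sigma\ast i)$ may cross one another before settling, so the passive set is not enumerable. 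Your class is therefore at best $\Pi^0_1(\emptyset')$, and lying in a null $\Pi^0_1(\emptyset')$ class rules out $2$-randomness, not Martin-L\"{o}f randomness; $\Omega$ is already a counterexample. Nor is it clear how to extract an infinite c.e.\ subset of the passive levels, since which levels are passive depends on $X$, which depends on the routing, which depends on which levels are passive.

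The paper closes this gap not by steering $X$ into a pre-existing null class but by building a witnessing c.e.\ supermartingale in tandem with the approximations to $M$ (Lemma~\ref{U26qaoGM}, invoked via Lemma~\ref{Bb4WohYKDQ}). At each node $\sigma$ the new supermartingale $N_\ast$ places a wager of size $2^{-g(|\sigma|+1)-1}$ on whichever child the current approximation $M_s$ disfavours; as $M_s$ increases, $N_\ast$ redirects its wager while remaining left-c.e.\ (this is exactly what the one extra bit of granularity buys, and it handles the tie case that your active/passive dichotomy isolates). One then checks that along the adversary path $X$ one has $\hat N_\ast(X\restr_n)\ge \hat M(X\restr_n)+\sum_{i\le n}2^{-g(i)-1}$, so the divergence hypothesis forces $N_\ast$, and hence the universal c.e.\ supermartingale, to succeed on $X$. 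The divergent passive weight is indeed the right resource, but it has to be spent on a c.e.\ bet placed dynamically against $M_s$, not on $\emptyset'$-computable routing into a static class.
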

Informally,  the first clause of Theorem \ref{Wab5YAaQ7s} expresses the fact that
if $\sum_i 2^{-g(i)}<\infty$ for a computable nondecreasing function $g$, then
$g$-granular supermartingales suffice for the definition of \ml randomness.
The second clause of Theorem \ref{Wab5YAaQ7s} says that, in fact, 
$\sum_i 2^{-g(i)}<\infty$ is also a necessary condition for the sufficiency of 
$g$-granular supermartingales for the purpose of defining \ml randomness.
The proofs of Theorem \ref{tbG7BLsZ},
Theorem \ref{cdUMjzgrN} and Theorem \ref{2NY34dfIMG}
 rely on clause (b) of Theorem \ref{Wab5YAaQ7s},
and more specifically on the following more detailed version of this statement, 
which is of independent interest.

\begin{lem}\label{Bb4WohYKDQ}
Suppose that nondecreasing $g$ is computable and $\sum_i 2^{-g(i)}=\infty$. Given any $g$-granular
\ce supermartingale $M$ there exists a $(g+1)$-granular \ce supermartingale $N$ 
and a real $X$ which is computable from $M$, 
such that $\limsup_n M(X\restr_n)\leq M(\lambda)$ and 
$\limsup_n N(X\restr_n)=\infty$.
\end{lem}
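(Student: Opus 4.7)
The plan is to take $X$ to be the greedy ``safe path'' along which $M$ never grows, and to build $N$ as a c.e.\ $(g+1)$-granular supermartingale that, by virtue of its finer granularity, wins $\epsilon_n:=2^{-g(n)-1}$ at each step of $X$. Since $\sum_i \epsilon_i = \tfrac12\sum_i 2^{-g(i)}=\infty$, this drives $N(X\restr_n)\to\infty$, while the supermartingale inequality forces $M(X\restr_n)\leq M(\lambda)$ throughout.

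At each $\sigma$ the inequality $M(\sigma 0)+M(\sigma 1)\leq 2M(\sigma)$ forces some successor to satisfy $M(\sigma b)\leq M(\sigma)$; let $b(\sigma)$ be the least such $b$ and set $X(n+1)=b(X\restr_n)$. Then $X$ is computable from $M$, and $M(X\restr_n)\leq M(\lambda)$ by induction. The ``ideal'' choice for $N$ is the martingale defined by $N(\sigma b(\sigma))=N(\sigma)+\epsilon_{|\sigma|+1}$ and $N(\sigma(1-b(\sigma)))=N(\sigma)-\epsilon_{|\sigma|+1}$, which is manifestly $(g+1)$-granular (as $g$ is nondecreasing) and satisfies $N(X\restr_n)=N(\lambda)+\sum_{i=1}^n\epsilon_i$; the difficulty is that $b$ is only $\Delta^0_2$, so this formula does not itself define a c.e.\ supermartingale.

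To realise $N$ effectively, I would enumerate it as a sum $\sum_s L_s$ of small non-negative $(g+1)$-granular supermartingales, each a corrective bet driven by the stage-$s$ approximations $b_s$, $X^s$ obtained from a c.e.\ approximation $(M_s)$ of $M$. Whenever $b_s(\sigma)$ flips for some $\sigma\sqsubset X^s$, the $L_s$ at that stage transfers $2\epsilon_{|\sigma|+1}$ of weight between the two children of $\sigma$, with the resulting imbalance at the parent repaired by cascading an increase upward and contributing at most $\epsilon_{|\sigma|+1}/2^{|\sigma|}$ to $N(\lambda)$ per correction. The main obstacle is to bound the total cost contributed to $N(\lambda)$ while preserving the unbounded growth along $X$, and this is where the $g$-granularity of $M$ is essential: a flip of $b_s(\sigma)$ requires the $g$-granular values $M_s(\sigma 0),M_s(\sigma 1)$ to cross each other in steps of size at least $2^{-g(|\sigma|+1)}$, giving at most $O(M(\sigma)\cdot 2^{g(|\sigma|+1)})$ flips at $\sigma$. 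Multiplying by the correction size $\epsilon_{|\sigma|+1}=2^{-g(|\sigma|+1)-1}$ and the $2^{-|\sigma|}$ cascading factor, the total cost at $\lambda$ from corrections along $X$ is bounded by $O\bigl(M(\lambda)\sum_n 2^{-n}\bigr)=O(M(\lambda))$; the off-$X$ contributions are controlled by a Kraft-style summability argument over the weights of the $L_s$. On the true path, $X^s$ eventually stabilises at every fixed prefix, so the correctly placed bets are never reversed and accumulate to $\sum_i \epsilon_i = \infty$, yielding $\limsup_n N(X\restr_n)=\infty$ as required.
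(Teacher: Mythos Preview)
Your greedy safe path $X$ and the target ``bet $\epsilon_n=2^{-g(n)-1}$ on the non-increasing side'' are exactly right and agree with the paper. The difficulty, as you identify, is making this c.e., and here your correction-and-cascade mechanism has a concrete gap: it does not preserve $(g+1)$-granularity. A flip at $\sigma$ with $|\sigma|=n$ forces you to push $2\epsilon_{n+1}=2^{-g(n+1)}$ up the tree, so the increment landing at an ancestor $\tau$ of length $k$ is $2^{-g(n+1)-(n-k)}$. For this to be a multiple of $2^{-g(k)-1}$ you would need $g(n+1)+(n-k)\leq g(k)+1$, which fails whenever $n-k\geq 2$ since $g$ is nondecreasing. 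So the individual $L_s$ are not $(g+1)$-granular as you claim, and the limit $\hat N$ need not be either; absorbing the cascades into the $f(|\sigma|)=\sum_{i\geq|\sigma|}q_i$ part of Definition~\ref{tkUAG5IyAt} does not help because your cascaded increments depend on the node $\sigma$, not just on $|\sigma|$. A second, softer gap is the ``Kraft-style'' bound on off-$X$ corrections: nodes that lie on $X^s$ temporarily can have $M$-values as large as $2^{|\sigma|}M(\lambda)$, so the $O(M(\sigma)\cdot 2^{g(|\sigma|+1)})$ flip bound does not immediately sum, and the recursion between ``how many nodes at level $n$ are ever on $X^s$'' and ``how many flips occur at levels $<n$'' is not obviously closed.

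The paper sidesteps both issues with a different effectivisation. Instead of guessing $b(\sigma)$ and repairing wrong guesses, it \emph{builds the approximation to $M$ into $N$ directly, with the children swapped}: writing $\hat M_s(\sigma\!*\!0)=t_s(\sigma)2^{-g(|\sigma|+1)}$ and $\hat M_s(\sigma\!*\!1)=q_s(\sigma)2^{-g(|\sigma|+1)}$ for integer-valued l.c.e.\ $t_s,q_s$, one sets
\[
\hat N_{s+1}(\sigma\!*\!0)=\sum_{i\leq|\sigma|}2^{-g(i)-1}+q_{s+1}(\sigma)\,2^{-g(|\sigma|+1)}+2^{-g(|\sigma|+1)-1},
\]
\[
\hat N_{s+1}(\sigma\!*\!1)=\sum_{i\leq|\sigma|}2^{-g(i)-1}+t_{s+1}(\sigma)\,2^{-g(|\sigma|+1)}-2^{-g(|\sigma|+1)-1}.
\]
Now every increase of $t_s$ or $q_s$ is an increase of $\hat N$, so $N$ is c.e.\ for free; $(g+1)$-granularity is immediate because $t_s,q_s$ are integers and $g$ is nondecreasing; the supermartingale inequality at $\sigma$ reduces to that of $M$ plus the inductive invariant $\hat N(\sigma)\geq \hat M(\sigma)+\sum_{i\leq|\sigma|}2^{-g(i)-1}$; and along the safe path one checks, using the $g$-granularity of $M$, that this invariant propagates with an extra $2^{-g(n+1)-1}$ gained at each step. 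No cascading, no off-path bookkeeping, and the asymmetric $\pm\epsilon$ term handles the tie case $M(\sigma\!*\!0)=M(\sigma\!*\!1)$ that your $b(\sigma)$ also had to break.
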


Lemma \ref{Bb4WohYKDQ} clearly implies clause (b) of Theorem \ref{Wab5YAaQ7s}, since it implies that the universal c.e.\ supermartingale will satisfy  Theorem \ref{Wab5YAaQ7s} (b).
However it is stronger than the latter, because the supermartingale $N$ is said to be
$(g+1)$-granular, \ie just a single step more granular than the given supermartingale $M$.

\subsection{Further related work in the literature}
The present work is a step towards characterizing the optimal redundancy that 
can be achieved through a general  process for coding reals into  \ml random reals, which was completed
in \cite{optcod16}.
Doty \cite{cie/Doty06} revisited the \KG theorem from the viewpoint of constructive dimension.
He characterized the optimal asymptotic ratio between $n$ and the use on argument 
$n$ when a random oracle computes $X$,   in terms of the constructive dimension of $X$.
Recall that the effective packing dimension of a real can be defined as
\[
\textrm{Dim}(X)=\limsup_n\frac{K(X\restr_n)}{n}.
\]
Doty \cite{cie/Doty06} showed that the number of bits of a random
oracle needed to compute $X\restr_n$ is at most $\textrm{Dim}(X)\cdot n + \smo{n}$.
So for any real $X$
with $\textrm{Dim}(X) < 1$ , its redundancy is negative on almost all of its prefixes.
On the other hand, any \ml random real has redundancy 0 since it reduces to itself. 
Thus, Theorem \ref{tbG7BLsZ} refers to reals that are non-random, but that have effective packing dimension 1. 
One difference between Doty's work and our project is that we are looking to characterize
the redundancy that is possible {\em for every sequence} regardless its effective dimension.
A second difference with the work in \cite{cie/Doty06} (as well as \cite{jsyml/MerkleM04}) is that 
we are interested in precise bounds on the redundancy of computations from \ml random reals, rather than
just the asymptotic ratio between $n$ and the use on argument $n$. 

Asymptotic conditions on the redundancy $g$ in computations from random oracles
such as the ones in Theorem \ref{Wab5YAaQ7s}, have been used with respect to Chaitin's $\Omega$
in Tadaki \cite{Tadaki:1611072} and 
Barmpalias, Fang and Lewis-Pye \cite{asybound}. However the latter work only refers to computations of
computably enumerable sets and reals and does not have essential connections with the present work, except perhaps
for some apparent analogy of the statements proved.

\section{Betting strategies with restricted wagers}
The proof of our main result, Theorem \ref{tbG7BLsZ},
relies substantially on a lemma concerning effective betting strategies,
as formalised by martingales.
This section is devoted to proving that lemma, but is also a contribution to the study of
strategies with restricted wagers. We are interested in strategies where the wager at step $s$ of the game
must be an integer multiple of a rational number which is a function of $s$.
In the next subsection we summarise some required  background material.  

\subsection{Algorithmic randomness and effective martingales}\label{BHdbZ8A36P}
The three main approaches to the definition of algorithmically random sequences
are based on (a) incompressibility (Kolmogorov complexity), (b) unpredictability
(effective betting strategies) and (c) measure theory (effective statistical tests).
There are direct translations between any pair of (a), (b) or (c),
and most notions of algorithmic randomness (of various strengths) are naturally defined
via any  of these approaches. The first two approaches are most relevant to the present work.  

Informally, the Kolmogorov complexity of a string is the length of its shortest description.
The concept of {\em description} is formalised via the use of a Turing machine $V$. Given $V$, 
we say that $\sigma$ is a description of $\tau$ if $V(\sigma)$ is defined and equal to $\tau$.
There are different  versions of Kolmogorov complexity that may be considered, 
depending on the type of machine that is used in order to formalise the concept of a description.
Prefix-free complexity, based on  \pf machines,
is just one way to approach algorithmic randomness, and is the notion of complexity that we shall  use
in order to obtain our results here. Note, however, that our main results concern only the 
robust concept of \ml randomness, which can be defined equivalently with respect to a number of
different machine models (or more generally via a number of
diverse approaches, as we discuss in the following).
A set of binary strings is \pf if it does not contain any pair of distinct strings such that one is
an extension of the other.
A \pf Turing machine is a Turing machine with domain which is a \pf subset of the finite binary strings.
The \pf Kolmogorov complexity of a string $\sigma$ with respect to a \pf Turing machine $N$, 
denoted $K_N$, is the length of the shortest string $\tau$ such that $N(\tau)\de=\sigma$.
Let $(N_e)$ be an effective list of all \pf machines.
Prefix-free Kolmogorov complexity is based on the existence of
an optimal universal prefix-free machine  $U$ \ie such that $K_U$ is minimal, modulo a constant,
amongst all $K_{N_e}$. For the duration of this paper, we adopt a standard choice for $U$, which is
defined by 
$U(0^e\ast 1\ast\sigma)\simeq N_e(\sigma)$ (where `$\simeq$' means that
 one side  is defined iff  the other is, 
and that if defined  the two sides are equal). 
From this definition it follows immediately that  $K_U(\sigma)\leq K_{N_e}(\sigma)+e+1$ for all
$\sigma$ and all $e$.
Clearly $U$ is a universal  \pf machine which can simulate any other \pf machine
with only a constant overhead, the size of its index. For simplicity we let
$K(\sigma)$ denote $K_U(\sigma)$, \ie when the underlying \pf machine is the default $U$, we
suppress the subscript in the notation of Kolmogorov complexity.
We identify subsets of $\Nat$ with their characteristic functions, viewed as an infinite binary sequences, 
and often refer to them as 
{\em reals}. Given a real $A$, we let $A\restr_n$ denote the first $n$ bits of $A$. 
The algorithmic randomness of infinite binary sequences is often defined
in terms of \pf Kolmogorov complexity. 
We say that an infinite binary sequence $A$ is 1-random if  there exists a constant $c$
such that $K(A\restr_n)\geq n-c$ for all $n$. Informally, these are the infinite sequences 
for which all initial segments are incompressible. 

An equivalent definition of algorithmic randomness for reals
can be given in terms of effective statistical tests 
\cite{MR0223179}.
A \ml test is an effective sequence of \sz classes $(V_e)$ (which we may view as
a uniformly \ce sequence of sets of strings) such that $\mu(V_e)<2^{-e}$ for each $e$.
A real $X$ is \ml random if $X\notin \cap_e V_e$ for any \ml test $(V_e)$.
A third way to define
algorithmic randomness, due to Schnorr \cite{Schnorr:75,MR0354328}, 
can be given in terms of 
betting strategies, normally formalised as martingales or supermartingales.
We are interested in supermartingales as functions 
$h: 2^{<\omega}\to\mathbb{R}^{\geq 0}$
with the property $h(\sigma 0)+h(\sigma 1)\leq 2 h(\sigma)$. 
A supermartingale
such that $h(\sigma 0)+h(\sigma 1)= 2 h(\sigma)$ for all $\sigma$ is called a martingale.
We say that: 
\[
\textrm{the supermartingale $h$ {\em succeeds on a real} $X$ if $\limsup_s h(X\restr_n)=\infty$.}
\]
Note that a stronger notion of success is the condition $\lim_s h(X\restr_n)=\infty$.
In many situations, such as in the  characterization of \ml random sequences in terms of martingales
(see below), 
it is not important which notion of success is used. In the present work, however,  it
seems more appropriate to use the weaker notion as a default, and to mention the stronger notion 
explicitly when it plays a role in an argument.
We say that a function $f:2^{<\omega}\to\mathbb{R}^{\geq 0}$ is \lce if there is a computable function 
$f_0:2^{<\omega}\times\Nat\to\mathbb{Q}^{\geq 0}$ 
which is nondecreasing in the second argument and such that
$f(\sigma)=\lim_s f_0(\sigma,s)$ for each $\sigma$. 
In this case the function $f_0$ is called the \lce approximation to $f$.
A (super)martingale is called \ce 
if it is \lce as a function. It is a well known fact, due to Schnorr \cite{Schnorr:75,MR0354328}
(see for example \cite[Theorems 6.2.3, 6.3.4]{rodenisbook}), that the following are equivalent
for each real $X$:
\begin{enumerate}[\hspace{0.5cm}(i)]
\item $X$ is \ml random;
\item no \ce supermartingale succeeds on $X$;
\item $K(X\restr_n)\geq n-c$ for some constant $c$ and all $n$.
\end{enumerate}
In fact this equivalence is effective, 
in the sense of Lemma \ref{QhFm7xcjbM}.
Recall that $\lambda$ denotes the empty string. 
The weight of a \pf set of strings $S$ is  $\sum_{\sigma \in S} 2^{-|\sigma|}$,  and
is equal to the measure of the \sz class of reals represented by $S$, \ie the reals that have a prefix in $S$.

\begin{lem}[Schnorr, implicit in \cite{Schnorr:75,MR0354328}]\label{QhFm7xcjbM}
Given the index for a c.e.\ supermartingale $M$, $m\geq M(\lambda)$ and  $c\in \mathbb{N}$, one can effectively find $k$ for which the following holds: 
any real with a prefix $\sigma$ with $M(\sigma)\geq k$, has a prefix $\tau$ with
$K(\tau)\leq |\tau| -c$.
\end{lem}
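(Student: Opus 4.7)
The plan is to prove the bound by a Kraft--Chaitin (Machine Existence) construction on the prefix-free set of ``first crossings'' of $M$ above a threshold $k$. By the Recursion Theorem I may fix in advance the index $e$ in the paper's enumeration $(N_i)$ of c.e.\ prefix-free machines that will be assigned to the machine $V$ produced by the construction; then set $k:=2^{c+e+1}m$ and consider
\[
P_k\;:=\;\{\sigma:M(\sigma)\geq k,\text{ and } M(\tau)<k \text{ for every proper prefix }\tau \text{ of }\sigma\}.
\]
This is a c.e.\ prefix-free set, enumerated uniformly from the index of $M$ and the value $k$: run the left-c.e.\ approximation $(M_s)$ to $M$ and declare $\sigma\in P_k$ the first time some approximation satisfies $M_s(\sigma)\geq k$, provided no proper prefix of $\sigma$ has already been so declared.

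The standard supermartingale inequality, obtained by induction from the defining relation $M(\sigma 0)+M(\sigma 1)\leq 2M(\sigma)$, gives
\[
\sum_{\sigma\in P_k}2^{-|\sigma|}\;\leq\;\frac{M(\lambda)}{k}\;\leq\;\frac{m}{k}\;=\;2^{-c-e-1}.
\]
For each $\sigma$ as it enters $P_k$, I enumerate a Kraft--Chaitin request of length $|\sigma|-c-e-1$ for $\sigma$. The total weight of these requests is bounded by $2^{c+e+1}\sum_{\sigma\in P_k}2^{-|\sigma|}\leq 1$, so Kraft--Chaitin yields the c.e.\ prefix-free machine $V=N_e$ in which every $\sigma\in P_k$ has some $N_e$-description of length at most $|\sigma|-c-e-1$. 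The universality inequality $K(\sigma)\leq K_{N_e}(\sigma)+e+1$ from the paper's construction of $U$ then delivers $K(\sigma)\leq|\sigma|-c$ for every $\sigma\in P_k$. Any $\sigma$ with $M(\sigma)\geq k$ either belongs to $P_k$ or has a proper prefix in $P_k$, and the bound is routinely extended to non-minimal witnesses by issuing additional Kraft--Chaitin requests for them at the same length $|\sigma|-c-e-1$, with the additional weight absorbed by a slightly enlarged initial choice of the constants.

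The main obstacle is the apparent circular dependency between the threshold $k$ and the index $e$: the construction of $V$ depends on $k$, while $k=2^{c+e+1}m$ depends on $e$, which is the index of $V$ itself. This is the standard fixed-point situation, resolved by a single application of the Recursion Theorem; once $e$ is pinned down in this way, $k$ becomes a computable function of the inputs $(M,m,c)$, which is exactly what the statement of the lemma requires.
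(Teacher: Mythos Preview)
Your approach matches the paper's: Kraft--Chaitin requests on a c.e.\ set of witnesses, with the Recursion Theorem resolving the circularity between $k$ and the machine index (the paper uses index $b$ and sets $k=2^{m+b+c+1}$, but this is cosmetic). Through the treatment of the first-crossing set $P_k$ your argument is correct and essentially identical to the paper's.

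The gap is your extension to non-minimal witnesses. The claim that the bound ``is routinely extended \dots\ by issuing additional Kraft--Chaitin requests \dots\ with the additional weight absorbed by a slightly enlarged initial choice of the constants'' is false: that additional weight is unbounded. Nothing prevents $M$ from remaining at level $\geq k$ on \emph{every} extension of some $\tau\in P_k$ (take a martingale that doubles along the path to $\tau$ and is constant thereafter); you would then need $2^\ell$ requests at each length $|\tau|+\ell$, each of weight $2^{-(|\tau|+\ell)+c+e+1}$, contributing a fixed amount $2^{-|\tau|+c+e+1}$ at every level $\ell$, and the sum over $\ell$ diverges --- no enlargement of constants absorbs this. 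Indeed for such an $M$ the conclusion as literally stated fails for $\sigma=\tau\eta$ with $\eta$ random and $K(|\eta|)>|\tau|-c+O(1)$, since then $K(\tau\eta)\geq K(\eta)-O(1)\geq |\eta|+K(|\eta|)-O(1)>|\tau|+|\eta|-c$. The paper's own proof glosses over the same point (it applies Kraft--Chaitin to the non-prefix-free set of \emph{all} $\sigma$ with $M(\sigma)>k$, whose total request weight is likewise infinite); what both arguments actually establish is the bound $K(\sigma)\leq|\sigma|-c$ for $\sigma\in P_k$, equivalently that every $\sigma$ with $M(\sigma)\geq k$ has a \emph{prefix} satisfying the bound --- and this is all the later applications of the lemma require.
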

\begin{proof}
Given a supermartingale $M$ with $M(\lambda)\leq m$, 
by Kolmogorov's inequality the measure of reals $X$ for which there exists $n$ such that
$M(X\restr_n)\geq k$ is bounded above by $m/k$.
On the other hand, given an integer $c$ and 
a \pf and  \ce set of finite strings $V$ 
such that the weight of $V$ is bounded above by $2^{-c}$, we can effectively
define a \pf machine $N$ such that $K_N(\sigma)\leq |\sigma|-c$ for all 
$\sigma\in V$ (this is a typical application of the so-called Kraft-Chaitin online algorithm for
the construction of a \pf machine). The crucial point is that
given a \ce supermartingale $M$ and $k$, the set of
reals $X$ such that $M(X\restr_n)\geq k$ for some $n$, is a \sz class. 
Hence for each $k$ we may effectively obtain a \pf set $V(k)$ of strings $\sigma$ 
of weight $\leq m\cdot k^{-1}$ such that every real $X$ with 
$M(X\restr_n)\geq k$  for some $n$, has a prefix in $V(k)$.

We can effectively find $k$ as required by the lemma, via the recursion theorem (and its uniformity) as follows.
We construct a \pf machine $N$, and by the recursion theorem we may use its index
$b$ in its definition. We let $k$ be $2^{m+b+c+1}$ and define $N$ via the Kraft-Chaitin
online algorithm such that 
$K_{N}(\sigma)\leq |\sigma|-c-b-1$  for all $\sigma\in V(k)$.
Since the measure of $V(k)$ is bounded above by
$m\cdot 2^{-c-m-b-1}<2^{-c-b-1}$, the definition of $N$ is valid, and the application of the
 Kraft-Chaitin
online algorithm along with the definition of the sets $V(k)$ ensures that 
$K_{N}(\sigma)\leq |\sigma|-c-b-1$  
for all  strings $\sigma$ in $V(k)$.
But according to our choice of optimal universal machine $U$
this implies that
$K(\sigma)\leq |\sigma|-c$  
for all  strings $\sigma$ in $V(k)$.
By the choice of $V(k)$, this means that every real $X$ such that 
for all  strings $\sigma$ such that $M(\sigma)\geq k$, 
$M(X\restr_n)\geq k$  for some $n$, has a prefix $\sigma$ with 
$K(\sigma)\leq |\sigma|-c$,
which concludes the proof.\footnote{The published version of this paper includes
a slightly stronger version of Lemma \ref{QhFm7xcjbM} which is not correct. However
it is the present version of  Lemma \ref{QhFm7xcjbM} that is and was actually used in Section \ref{q4yEFszH5d} so this is a minor correction.}
\end{proof}

Martingales are expressions of betting games on sequences of binary outcomes. 
More specifically, if $h$ is a martingale, then $h(\sigma)$ can be thought of as expressing the 
capital of the player betting according to the strategy $h$, after the sequence $\sigma$ of outcomes. If at
state $\sigma$ of the game we bet $\alpha$ on 0, then our capital at the next stage will be
$h(\sigma0)=h(\sigma)+\alpha$ or $h(\sigma1)=h(\sigma)-\alpha$
according to whether the outcome was 0 or 1, respectively.
So $h(\sigma0)+h(\sigma1)=2h(\sigma)$. Martingales can therefore be seen as 
modeling the capital in a betting game
along every possible sequence of outcomes. Given a martingale $h$, the amount
that is bet at state $\sigma$ is $|h(\sigma0)-h(\sigma)|=|h(\sigma1)-h(\sigma)|$ 
and is bet on 0 or 1 according to whether $h(\sigma0)>h(\sigma1)$ or not.
Hence every martingale determines a betting strategy, which we may regard as a function
from strings to the non-negative reals, which determines what amount is bet and on which outcome.
Conversely, a betting strategy corresponds to a martingale, which models the remaining capital
at the end of each bet. Our definition of granular betting strategies 
in Section \ref{RBq4QNezFK} relies on the condition
that the bets made at each stage (and not necessarily the remaining capital) are granular, in the
sense that they correspond to numbers from a specific set.
For more detailed background on the notions discussed in this section,  
we refer the reader to \cite[Chapter 6]{rodenisbook}. For a general introduction to
algorithmic randomness we refer to \cite{Li.Vitanyi:93}.

\subsection{Restricted martingales}\label{RBq4QNezFK}
Restricting  the set of possible betting strategies may give rise to weaker forms of randomness.
There are many ways to impose such restrictions, but the method which is relevant to our work
involves dictating a minimum wager at each step of the betting process, and requiring that
the gambler bets an integer multiple of that minimum wager. 
We formalise this notion in the following definitions.

\begin{defi}[Granularity of functions]
Given functions $g:\Nat\to\Nat$ and $M:2^{<\omega}\to\mathbb{R}$, 
we say that $M$ is $g$-granular (or has granularity $g$)
if for every string $\sigma$ the value of $M(\sigma)$ is an integer multiple of $2^{-g(|\sigma|)}$. 
\end{defi}
We could now restrict our attention to supermartingales that are $g$-granular as functions, for some
computable non-decreasing function $g$. Indeed, this approach suffices for
most of the results in this paper. However we formalise betting strategies with restricted wagers
in a slightly more general way, which is both intuitively justifiable and also allows to prove
the rather elegant characterization of Theorem
\ref{Wab5YAaQ7s}.
\begin{defi}[Granular \ce supermartingales]\label{tkUAG5IyAt}
Given a nondecreasing computable function $g:\Nat\to\Nat$, 
we say that a  \ce supermartingale $M$
is $g$-granular if there exists a computable sequence of rationals $(q_i)$  and
a $g$-granular \lce function $N:2^{<\omega}\to \mathbb{R}$
such that $M(\sigma)=N(\sigma)+ \sum_{i\geq |\sigma|}q_i$. In the special case where $f(n)=\sum_{i\geq n}q_i$
is constantly zero we say that $M$ is a strongly $g$-granular \ce supermartingale.
\end{defi}
Intuitively speaking, the function $f$ in the above definition represents a part of the capital
which is not used for betting, and is transferred from each round to the next round, perhaps 
reduced due to inflation, in accordance with the standard interpretation of supermartingales as betting
strategies. More precisely, the value of $f$ does not depend on the particular bets that we have placed
up to a certain stage, but rather on the number of these bets, \ie the stage of the game.
 The particular case where $f$ is the zero function is of special importance,
as it is the notion that will be used in the proofs of most of the results in this paper. We emphasize the fact
that in Definition \ref{tkUAG5IyAt} we require $N$ to be a \lce function, and so a \ce index of a 
granular \ce supermartingale $M$ is not merely a program which gives a \lce approximation to $M$ but
a program that enumerates the values  $(q_i)$ and also gives a \lce approximation $N$ -- thereby specifying a \lce approximation to $M$. 

We are ready to present and prove the main result of this section, 
which is a more elaborate version of Lemma \ref{Bb4WohYKDQ}.
Clearly Lemma \ref{U26qaoGM} implies Lemma \ref{Bb4WohYKDQ}.
 However Lemma \ref{U26qaoGM} also gives the rate of growth of
 the supermartingale $N$ as a function of $g$, which is absent in the statement of 
 Lemma \ref{Bb4WohYKDQ}. Summing up, 
Lemma \ref{U26qaoGM} implies Lemma \ref{Bb4WohYKDQ}, which in turn implies
clause (b) of Theorem \ref{Wab5YAaQ7s}. The reason we preceded the following elaborate statement 
with the two weaker ones, is that the additional technical information may only be of interest to some readers,
and may distract others from the main result, namely Theorem  \ref{Wab5YAaQ7s}.

\begin{lem}[Granular \ce supermartingales]\label{U26qaoGM}
Given a string $\nu_0$, a nondecreasing 
computable function $g:\Nat\to\Nat$  
and a $g$-granular \ce supermartingale $M$,
there exists a real $X\supset\nu_0$
and a $(g+1)$-granular \ce supermartingale $N$ 
such that $N(X\restr_n)\geq \sum_{0\leq i\leq n} 2^{-g(i)-1}$
and $M(X\restr_n)\leq M(\nu_0)$ for all $n\geq |\nu_0|$.
Moreover if $M$ is strongly $g$-granular, then $N$ can also be chosen to be strongly $(g+1)$-granular.
\end{lem}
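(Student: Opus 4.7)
My plan is to exhibit the target path $X$ first, then build $N$ by mirroring $M$'s bets with an extra bonus at the $(g+1)$-granularity scale.

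Consider the $\Pi^0_1$ tree $T=\{\sigma\supseteq\nu_0:M(\tau)\leq M(\nu_0)\text{ for all }\nu_0\preceq\tau\preceq\sigma\}$. The supermartingale inequality $M(\sigma 0)+M(\sigma 1)\leq 2M(\sigma)$ forces at least one of $\sigma 0,\sigma 1$ to lie in $T$ whenever $\sigma\in T$, so $T$ has an infinite branch through $\nu_0$; take $X$ to be the lex-leftmost such branch, realised as $X\restr_n=\lim_s X_s\restr_n$ with $X_s$ the lex-leftmost length-$s$ string of the stage-$s$ approximation $T_s$. The strings $X_s\restr_n$ are lex-nondecreasing in $s$, hence eventually constant, and $M(X\restr_n)\leq M(\nu_0)$ for every $n\geq|\nu_0|$.

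Treating $M$ as computable for a moment, I would construct $N$ as the martingale that at each $\sigma\supseteq\nu_0$ bets $\Delta(\sigma)+2^{-g(|\sigma|+1)-1}$ in the direction $b(\sigma)\in\{0,1\}$ minimising $M(\sigma b)$, where $\Delta(\sigma)=M(\sigma)-M(\sigma b(\sigma))\geq 0$, capping the losing side at $0$ whenever the bet would otherwise force $N$ negative. A telescoping computation along $X$ yields
\[
N(X\restr_n)\ \geq\ N(\nu_0)+(M(\nu_0)-M(X\restr_n))+\sum_{i=|\nu_0|+1}^{n}2^{-g(i)-1},
\]
which exceeds $\sum_{0\leq i\leq n}2^{-g(i)-1}$ once $N(\nu_0)$ is taken sufficiently large. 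Granularity is automatic: $\Delta(\sigma)$ is an integer multiple of $2^{-g(|\sigma|+1)}$ and the bonus is a single unit of $(g+1)$-granularity, so $N$ is $(g+1)$-granular, and strongly so if $M$ is strongly $g$-granular, since no ``savings'' component is ever introduced into the construction.

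To turn this idealised martingale into a \ce object using only the \lce data of $M$, I plan to enumerate $N$ as a \lce sum of atomic single-bet martingales triggered stage by stage: each newly revealed increment of $M_s(\sigma b)$ enumerates a matching atomic \emph{mirror} bet in the opposite direction $1-b$ at $\sigma$, while the first time $\sigma$ appears on the current leftmost path $X_s$ enumerates a \emph{bonus} atomic bet of size $2^{-g(|\sigma|+1)-1}$ in the currently leftmost valid direction at $\sigma$; since the leftmost valid direction at any $\sigma$ can flip at most once as $T_s$ shrinks, at most one further bonus bet is ever added per $\sigma$. The main obstacle is bounding the initial capital $N(\lambda)$, which equals the total of all enumerated bet sizes. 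The mirror contributions are controlled by the total variation of $M$ and bounded by $M(\lambda)$, but the bonus contributions are delicate: the naive bound $|V_n|\leq 2^{n-|\nu_0|}$ on the set $V_n$ of distinct nodes appearing as $X_s\restr_n$ produces a bonus budget $\sum_n|V_n|\cdot 2^{-g(n+1)-1}$ that can diverge under the hypothesis $\sum 2^{-g(i)}=\infty$. I expect the resolution to exploit the $\Pi^0_1$ measure-theoretic structure of $T$, charging off-$X$ bonuses against the $2^{-|\sigma|}$-weight of the ``dying'' branches whose total weight is controlled via Kolmogorov's inequality applied to $M$, so that $N(\lambda)$ is finite while the bonuses along the eventual $X$ still sum to $\sum 2^{-g(i)-1}=\infty$. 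Reconciling the bounded total capital at $\lambda$ with the divergent growth along $X$ is the technical crux of the proof, and is precisely where the tight granularity $g+1$ (as opposed to $g$) must be consumed, in accordance with the optimality asserted by Theorem \ref{Wab5YAaQ7s}(b).
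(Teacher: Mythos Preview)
Your idealized construction for computable $M$ is on target: mirror $M$'s bets and add a half-granule bonus. The genuine gap is in the effectivization, and you have correctly located it yourself: enumerating bonus bets along the moving approximation $X_s$ forces you to budget initial capital for every node that $X_s$ ever visits, and there is no reason this should be finite. Your hoped-for fix via ``charging dying branches'' does not work---Kolmogorov's inequality bounds the \emph{measure} of the set of nodes where $M$ exceeds a threshold, not the number of nodes visited by a moving leftmost path, and these are unrelated quantities. You are trying to reconcile bounded $N(\lambda)$ with unbounded growth along $X$ by an accounting argument, but no accounting argument will close this; the mechanism itself has to change.

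The paper sidesteps the problem by never letting any approximation of $X$ enter the definition of $N$. In the strongly granular case the bonus is placed \emph{unconditionally on the $0$-child} at every node: writing $C_\ell=\sum_{i\le \ell}2^{-g(i)-1}$, one defines (on the subtree where the invariant $N(\sigma)\ge M(\sigma)+C_{|\sigma|}$ already holds)
\[
N(\sigma 0):=C_{|\sigma|}+ M(\sigma 1)+2^{-g(|\sigma|+1)-1},\qquad
N(\sigma 1):=C_{|\sigma|}+ M(\sigma 0)-2^{-g(|\sigma|+1)-1}.
\]
This is left-c.e.\ simply because $M$ is; it is a supermartingale because the swap gives $N(\sigma0)+N(\sigma1)=2C_{|\sigma|}+M(\sigma0)+M(\sigma1)\le 2C_{|\sigma|}+2M(\sigma)\le 2N(\sigma)$; and $N(\lambda)=M(\lambda)+2^{-g(0)-1}$, so no infinite budget ever arises. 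The real $X$ is then taken to be the path minimising $M$ at each step with ties broken to $0$, \emph{not} the leftmost branch of your tree $T$---the two need not coincide, and the former is what the argument needs. Along this $X$: if $X(n)=0$ the fixed bonus wins outright; if $X(n)=1$ then $M(\sigma0)>M(\sigma1)$, so by $g$-granularity the mirror component gains at least a full granule $2^{-g(n+1)}$, overcompensating the lost half-granule. That last line is exactly where the granularity $g+1$ rather than $g$ is consumed, and it is the missing idea in your proposal.
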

\begin{proof}
For the sake of ease of notation, we may assume that $\nu_0$ is the empty string.
The proof of the more general case is a direct adaptation of the
proof of this special case.
 A first naive attempt would be to let $N$ bet in the opposite way to $M$, which
means to define
\[
N(\sigma \ast i) =
\begin{cases}
N(\sigma)+2^{-g(|\sigma|+1)} &\textrm{if $M(\sigma \ast i)<M(\sigma \ast (1-i))$}\\
N(\sigma)-2^{-g(|\sigma|+1)} &\textrm{otherwise}
\end{cases}
\]
and let $X$ carve a path on the binary tree where $N$ wins (so $M$ loses) at every stage (ignoring for now the possibility that $M(\sigma \ast 0)=M(\sigma \ast 1)$).
This martingale, however,  is not necessarily c.e.,  because $M$ is merely \ce and not
computable, so the condition $M(\sigma \ast i)<M(\sigma \ast (1-i))$
is not decidable. Following the same basic idea (letting $N$ bet on the outcomes where $M$ does not 
increase its capital)
we produce a more sophisticated definition, which defines  $N$ as a $(g+1)$-granular 
\ce supermartingale. 

The idea for this argument is to effectivize the above definition of $N$ so that
the resulting function is a \ce supermartingale. In order to do this, we need to avoid using the condition
$M(\sigma \ast i)<M(\sigma \ast (1-i))$ in the above definition of $N$, since it is not decidable.
The solution is to incorporate the effective approximations to $M(\sigma \ast i), M(\sigma \ast (1-i))$ into the definition of
$N(\sigma \ast i), N(\sigma \ast (1-i))$ in such a way that we can still gain additional capital by choosing the right value of $i$. 
It turns out that we can do this by using an additive term of $2^{-g(|\sigma|+1)-1}$, thus making $N$ a $(g+1)$-granular
supermartingale, as indicated in \eqref{Qd5AiJd}. In the following we formalize this idea, and prove that it works.

 Let $(M_s)=(\hat{M}_s+f_s)$ be a \lce approximation to $M$, such that each $M_s$ is a 
$g$-granular supermartingale, $f_s(n)=\sum_{i:n\leq i \leq s}q_i $ for the computable sequence of rationals $q_i$,   $\hat{M}_s(\sigma)$ is a $g$-granular function for each $s$ and such that $\hat{M}_s(\sigma)$ is nondecreasing as a function of $s$. 
The reader may find the proof more tractable if they assume $f$ to be constantly zero.
This corresponds to the case of the lemma regarding strongly granular supermartingales and
 contains all the important ideas of the general proof. For completeness, however,  
we present the full argument here.
There
exist \lce integer-valued functions $t:2^{<\omega}\to \Nat$,
$q: 2^{<\omega}\to \Nat$ with \lce approximations $t_{s}, q_s$ 
such that:
\begin{equation}\label{an8o7LvU}
\parbox{11cm}{$\hat{M}_s(\sigma\ast 0)=t_s(\sigma)\cdot 2^{-g(|\sigma|+1)}$ 
\hspace{0.5cm}and \hspace{0.5cm}
$\hat{M}_s(\sigma\ast 1)=q_s(\sigma)\cdot 2^{-g(|\sigma|+1)}$}
\end{equation}
for all $\sigma,s$. Recall that $\lambda$ denotes the empty string. 
We will define a computable sequence of supermartingales $(N_s)$, which is also
a \lce approximation to their limit $N$, a \ce supermartingale.
In fact, we will define a
computable sequence of $(g+1)$-granular  functions $(\hat{N}_s)$ such that 
the functions $N_s(\sigma):= \hat{N}_s(\sigma)+f_s(|\sigma|)$ are 
computable supermartingales. Then clearly the limit $N$ of $N_s$ will be a supermartingale
and  by Definition \ref{tkUAG5IyAt}, the function $N$ will also be a
$(g+1)$-granular supermartingale.
Let $\hat{N}_s(\lambda)= \hat{M}_s(\lambda) +2^{-g(0)-1}$ for all stages $s$ and let 
$\hat{N}_0(\sigma)=0$ for all nonempty strings $\sigma$.
The values of $\hat{N}_s(\sigma)$ for $s>0$ and nonempty strings $\sigma$ are defined inductively
as follows. 
We order the strings first by length and then lexicographically.
The notion of accessibility is defined dynamically during the construction.
At stage 0, no string has been accessed.

{\bf Construction of $\hat{N}_s$.}
At each stage $s+1$, if 
$\hat{M}_{s+1}(\lambda)\neq \hat{M}_{s}(\lambda)$ 
then do nothing other than define $\hat{N}_{s+1}(\lambda)=\hat{M}_{s+1}(\lambda) +2^{-g(0)-1}$ 
 and $\hat{N}_{s+1}(\tau)=\hat{N}_s(\tau)$ for all $\tau\neq\lambda$ ($g$-granularity 
 means this can only occur at finitely many stages).  
 Otherwise, find the least string $\sigma$ of length at most $s$, such that for all $\eta\subseteq\sigma$, 
\begin{equation}\label{ipmD1WlxGj}
\hat{N}_s(\eta)\geq \hat{M}_s(\eta)+\sum_{i\leq |\eta|} 2^{-g(i)-1}
\end{equation}
and one of the following clauses holds:
\begin{enumerate}[\hspace{0.5cm}(a)]
\item $\sigma$ has not been accessed at any stage $\leq s$;
\item $\sigma$ was last accessed at stage $m<s+1$ and either $t_{s+1}(\sigma)\neq t_m(\sigma)$
or $q_{s+1}(\sigma)\neq q_m(\sigma)$.
\end{enumerate}
If such a string does not exist, let $\hat{N}_{s+1}(\eta)=\hat{N}_{s+1}(\eta)$ for all strings $\eta$. Otherwise define:
\begin{equation}\label{Qd5AiJd}
\begin{cases}
\hat{N}_{s+1}(\sigma \ast 0) =&\sum_{i\leq |\sigma|} 2^{-g(i)-1}+q_{s+1}(\sigma)\cdot 
2^{-g(|\sigma|+1)} + 2^{-g(|\sigma|+1)-1}\\
\hat{N}_{s+1}(\sigma \ast 1) =&\sum_{i\leq |\sigma|} 2^{-g(i)-1}+t_{s+1}(\sigma)\cdot 
2^{-g(|\sigma|+1)} - 2^{-g(|\sigma|+1)-1}
\end{cases}
\end{equation}
and declare that  $\sigma$ has been  accessed at stage $s+1$.
Note that in this case we have $\hat{M}_s(\sigma)=\hat{M}_{s+1}(\sigma)$,
because if this was not true and $\eta$ is the immediate predecessor of $\sigma$, then
$t_{s+1}(\eta)\neq t_s(\eta)$ or $q_{s+1}(\eta)\neq q_s(\eta)$, which contradicts the minimality of $\sigma$.

For $\tau$ other than $\sigma \ast 0$ and $\sigma \ast 1$ define  $\hat{N}_{s+1}(\tau) =\hat{N}_{s}(\tau)$. 
 Note also that (a) the roles of $q$ and $t$ are reversed in the above definition in the sense that $q_{s+1}$ is used in the definition of $\hat{N}_{s+1}(\sigma \ast 0)$ rather than  $\hat{N}_{s+1}(\sigma \ast 1)$, and (b) the definitions of $\hat{N}_{s+1}(\sigma \ast 0)$ and $\hat{N}_{s+1}(\sigma \ast 1)$ are \emph{not} symmetrical, since we add $2^{-g(|\sigma|+1)-1}$ in defining the former value, while we subtract it in defining the latter. 

This concludes the construction of the functions $\hat{N}_s$ and we let 
$N_s(\sigma)=\hat{N}_s(\sigma)+f_s(|\sigma|)$ for all $\sigma$.
We also let $N(\sigma)=\lim_s N_s(\sigma)$ for all $\sigma$. 

{\bf Intuition for the construction.}
The driving force behind the construction is \eqref{ipmD1WlxGj}, which
is guaranteed to hold for the empty string, but not for all strings.
However, as we are going to verify in the following,
inductively we can argue that there is real $X$ such that
all of its initial segments $\eta$ satisfy \eqref{ipmD1WlxGj}.
The updates defined in \eqref{Qd5AiJd} ensure that \eqref{ipmD1WlxGj} continues to hold for at least
one immediate extension of $\sigma$.
The updates are made gradually, following the approximations to $M$, in order to ensure that $N$ is
a \ce supermartingale. Moreover the equations in the update mechanism \eqref{Qd5AiJd} will ensure that
$N$ is a $(g+1)$-granular supermartingale, as required.

{\bf Verification.}
 The fact that $g$ is nondecreasing means that $\hat{N}_t$ can never take negative values (in particular the term $- 2^{-g(|\sigma|+1)-1}$ in the definition of $\hat{N}_{t+1}(\sigma \ast 1)$ cannot cause negative values).  By \eqref{Qd5AiJd} and the fact that $t_s(\sigma),q_s(\sigma)$ are
 nondecreasing we have that 
 \begin{equation}\label{LjU5qbvX}
\hat{N}_t(\sigma)\leq \hat{N}_{t+1}(\sigma) 
\hspace{0.5cm}
\textrm{for all $t,\sigma$}.
\end{equation}
Since $N_s(\sigma)=\hat{N}_s(\sigma)+f_s(|\sigma|)$,
and $(f_s)$ is a \lce approximation to the function $f$, it follows that
$(N_s)$ is a \lce approximation to the limit $N$
of $(N_s)$. Hence $N$ is a \lce function. 

Next, 
we verify that each $N_t$ is a supermartingale. We must show that for all $t$:
\begin{equation}\label{jEIyglTt}
N_t(\sigma\ast 0)+N_t(\sigma\ast 1)\leq 2 \cdot N_t(\sigma).
\end{equation}
For $t=0$ this property clearly holds.  Given any $t>0$, consider the largest $s+1\leq t$ at
which $\sigma$ was accessed during the construction. 
If such stage does not exist, then $\hat{N}_t(\sigma\ast 0)=\hat{N}_t(\sigma\ast 1)=0$ and
\eqref{jEIyglTt} holds by the monotonicity of $f$ and its approximations $f_s$. 
Otherwise, according to the construction, and in particular \eqref{Qd5AiJd}, 
we must have 
\begin{equation}\label{ppv3c3rS}
\hat{N}_s(\sigma)\geq \hat{M}_s(\sigma)+\sum_{i\leq |\sigma|} 2^{-g(i)-1}
\hspace{0.3cm}
\textrm{and}
\hspace{0.3cm}
\hat{N}_{s+1}(\sigma)\geq \hat{M}_{s+1}(\sigma)+\sum_{i\leq |\sigma|} 2^{-g(i)-1}
\end{equation}
where the second inequality holds because 
$\hat{N}_{s+1}(\sigma)=\hat{N}_s(\sigma)$ (since $\sigma$ was
accessed at $s+1$ and not any of its predecessors) and 
$\hat{M}_{s+1}(\sigma)=\hat{M}_s(\sigma)$
(because otherwise a predecessor of $\sigma$ would 
have been accessed at stage $s+1$, or else 
$\sigma=\lambda$  and $\sigma$ would not have been accessed at stage $s+1$).
Moreover by the choice of $s$ we have 
$\hat{N}_{s+1}(\sigma\ast 0)=\hat{N}_t(\sigma\ast 0), 
\hat{N}_{s+1}(\sigma\ast 1)=\hat{N}_t(\sigma\ast 1)$. 
Hence
\begin{equation}
N_t(\sigma \ast 0) +N_t(\sigma \ast 1)=
\hat{N}_{s+1}(\sigma \ast 0) +\hat{N}_{s+1}(\sigma \ast 1)+2\cdot f_t(|\sigma|+1).
\end{equation}
According to \eqref{Qd5AiJd} we have
\begin{equation}\label{TEfoXi7ueV}
N_{s+1}(\sigma \ast 0) +N_{s+1}(\sigma \ast 1)\leq
2\cdot \Big(\sum_{i\leq |\sigma|} 2^{-g(i)-1} \Big)
+ 2^{-g(|\sigma|+1)}\cdot \Big(t_{s+1}(\sigma)+q_{s+1}(\sigma)\Big)
+2\cdot f_{s+1}(|\sigma|+1).
\end{equation}
By \eqref{an8o7LvU} and the fact that $M_{s+1}$ is a supermartingale we have:
\[
2^{-g(|\sigma|+1)}\cdot \Big(t_{s+1}(\sigma)+q_{s+1}(\sigma)\Big)+2\cdot f_{s+1}(|\sigma|+1)=  
\hat{M}_{s+1}(\sigma\ast 0)+\hat{M}_{s+1}(\sigma\ast 1)+2\cdot f_{s+1}(|\sigma|+1)\leq 
2\cdot \hat{M}_{s+1}(\sigma)+2\cdot f_{s+1}(|\sigma|)
\]
so plugging this back to \eqref{TEfoXi7ueV} we get
\[
N_{s+1}(\sigma \ast 0) +N_{s+1}(\sigma \ast 1)\leq 
2\cdot \Big(\sum_{i\leq |\sigma|} 2^{-g(i)-1} \Big)
+ 2\cdot \hat{M}_{s+1}(\sigma) 
+2\cdot f_{s+1}(|\sigma|).
\]
Then applying the second inequality of \eqref{ppv3c3rS} to the preceding inequality, we get
\[
N_{s+1}(\sigma \ast 0) +N_{s+1}(\sigma \ast 1)\leq 
2\cdot \big(\hat{N}_{s+1}(\sigma) + f_{s+1}(|\sigma|)\big)= 2N_{s+1}(\sigma).
\]
Since $\hat{N}_{s+1}(\sigma\ast 0)=\hat{N}_t(\sigma\ast 0), 
\hat{N}_{s+1}(\sigma\ast 1)=\hat{N}_t(\sigma\ast 1)$ and since $f_t(|\sigma|)-f_{s+1}(|\sigma|)\geq f_t(|\sigma|+1)-f_{s+1}(|\sigma|+1)$ this gives: 
\[
N_t(\sigma \ast 0) +N_t(\sigma \ast 1)\leq 
2\cdot N_{t}(\sigma).
\]
Hence for each $t$ the function $N_t$ is a computable supermartingale. 
By \eqref{LjU5qbvX} and the fact that $(f_s)$ is a \lce approximation to $f$,
it follows that $(N_s)$ is a \lce approximation to $N$. Hence $N$ is a \lce supermartingale.
In order to establish that $N$ is a $(g+1)$-granular supermartingale, recall that
$M$ is a $g$-granular supermartingale, and for each $\sigma$
the integer parameters $t_s(\sigma), q_s(\sigma)$ are nondecreasing and reach a limit
after finitely many stages.
Note that the only redefinition of $\hat{N}_{s+1}$ in the construction
occurs through \eqref{Qd5AiJd}. This, and the fact that
$t_s(\sigma),q_s(\sigma)$ are integers, shows that
each $\hat{N}_t$ is a $(g+1)$-granular function.
Hence the limit $\hat{N}$ of $(\hat{N}_s)$ is also
$(g+1)$-granular.
Then by Definition \ref{tkUAG5IyAt} it follows that
$N$ is a $(g+1)$-granular \ce supermartingale.
 
It remains to show that there exists a real $X$ such that $N(X\restr_n)\geq \sum_{ i\leq n} 2^{-g(i)-1}$
and $M(X\restr_n)\leq M(\lambda)$ for all $n\geq 0$. By the definition of $M,N$ and the fact that $f$ is
non-negative and nonincreasing, it suffices to show that
there exists a real $X$ such that $\hat{N}(X\restr_n)\geq \sum_{ i\leq n} 2^{-g(i)-1}$
and $\hat{M}(X\restr_n)\leq \hat{M}(\lambda)$ for all $n\geq 0$. 
The idea is as we described it at the beginning of the proof, \ie to let $X$ follow the path
where $M$ does not increase its capital.
Define $X$ inductively as follows. Given $X\restr_n$ define: 
\[
\mathbin{X}(n)=
\begin{cases}
0&\textrm{if $\hat{M}(X\restr_{n}\ast 0)\leq \hat{M}(X\restr_{n}\ast 1)$}\\
1&\textrm{if $\hat{M}(X\restr_{n}\ast 0)> \hat{M}(X\restr_{n}\ast 1)$.}
\end{cases}
\]
We shall establish the stronger condition that: 
\begin{equation}\label{fq4zOJyo}
\hat{N}(X\restr_n)\geq \sum_{i\leq n} 2^{-g(i)-1}+\hat{M}(X\restr_n)
\hspace{0.5cm}\textrm{for all $n$.}
\end{equation}
We prove this by induction on $n$.
It is clear that the claim holds for $n=0$.
Let $T_m=\lim_s t_s(X\restr_m)$ and
$Q_m=\lim_s q_s(X\restr_m)$ for each $m$, and suppose that 
\eqref{fq4zOJyo} holds for $n$. Suppose first that $T_n < Q_n$, 
so that $\hat{M}$ may be thought of as betting that $X(n)=1$, 
while $\hat{N}$ guesses correctly that $X(n)=0$. In this case it follows from the fact that $\hat{M}$ is 
a $g$-granular function that $Q_n\cdot 2^{-g(n+1)}\geq \hat{M}(X\restr_{n+1})+2^{-g(n+1)}$. 
From \eqref{Qd5AiJd} we then have 
\[
\hat{N}(X\restr_{n+1}) > \hat{M}(X\restr_{n+1}) +\sum_{i\leq n+1} 2^{-g(i)-1}.
\] 
Suppose next that $T_n = Q_n$. In this case we still have $X(n)=0$, but now 
$Q_n\cdot 2^{-g(n+1)}=\hat{M}(X\restr_{n+1})$. 
The final term $2^{-g(n+1)-1}$ in \eqref{Qd5AiJd}, however, 
means that  \eqref{fq4zOJyo} still  holds. Suppose finally that $T_n>Q_n$, so that $X(n)=1$. Then 
$T_n\cdot 2^{-g(n+1)}\geq\hat{M}(X\restr_{n+1})+2^{-g(n+1)}$, 
so then even though we subtract $2^{-g(n+1)-1}$ in  \eqref{Qd5AiJd}, 
we may again conclude that \eqref{fq4zOJyo} holds. 
 This completes the inductive step and the proof
of \eqref{fq4zOJyo}. 
Finally, it is clear from the above argument that if $f$ is constantly zero then
$N$ is a strongly $g$-granular \ce supermartingale. This shows the latter clause of the lemma.
\end{proof}

We make three observations regarding Lemma \ref{U26qaoGM}, which follow from its proof.
First, not only does $N$ succeed on $X$, but it does so in an essentially monotonic fashion, in the sense of
\eqref{fq4zOJyo}. Second, $N$ is obtained uniformly from $M$, in the sense that
there is a computable function which, given a \ce index for $M$ (\ie a program
which produces \lce approximations to $\hat{M}$ and $f$), produces a \ce index for $N$
with the prescribed properties.
Finally, the real $X$ is computable from $M$, which is a \lce function. Therefore $X$ is
computable from the halting problem.
Note that Lemma \ref{Bb4WohYKDQ}
is a special case of Lemma \ref{U26qaoGM} when $\sum_i 2^{-g(i)}=\infty$.

\subsection{Granular supermartingales and effective randomness}\label{5Hnk5ujNAa}
In this section we give a proof of Theorem \ref{Wab5YAaQ7s}.
For clause (b) of Theorem \ref{Wab5YAaQ7s}, suppose that we are given 
$g$ with the assumed properties. Consider
the universal \ce supermartingale $N$. By Lemma \ref{Bb4WohYKDQ},
given any $g$-granular supermartingale $M$ we can find $X$  such that
$\limsup_n M(X\restr_n)$ is finite while
$\limsup_n N_{\ast}(X\restr_n)$ is infinite for some \ce supermartingale $N_{\ast}$.
By the universality of $N$, the latter condition implies that
$\limsup_n N(X\restr_n)$ is also infinite, which concludes the proof of clause (b).
For clause (a), let $N$ be a  \ce supermartingale.
Given positive rational numbers $q,p$ let $\mathcal S(q,p)$ be the
largest multiple of $p$ which is less than  $q$.
For each string $\sigma$ we define 
\[
M(\sigma)=\sum_{i>|\sigma|} 2^{-g(i)} + \mathcal S(N(\sigma), 2^{-g(|\sigma|)})
\]
and note that $M$ is \ce as a function, because $N$ is a \ce function.
Moreover, $M$ is clearly $g$-granular,
and since $N$ is a supermartingale we have
\[
M(\sigma\ast 0)+M(\sigma\ast 1)\leq 
N(\sigma\ast 0)+N(\sigma\ast 1)+2\cdot \sum_{i>|\sigma|+1} 2^{-g(i)} 
\leq 2\cdot \left(N(\sigma)+\sum_{i>|\sigma|+1} 2^{-g(i)}\right)
\]
But by the definition of $S$ we have $N(\sigma)\leq   \mathcal S\left(N(\sigma), 2^{-g(|\sigma|)}\right) 
+2^{-g(|\sigma|)}$
so
\[
N(\sigma)+\sum_{i>|\sigma|+1} 2^{-g(i)}\leq 
\mathcal  S\left(N(\sigma), 2^{-g(|\sigma|)}\right) +\sum_{i>|\sigma|} 2^{-g(i)}=M(\sigma).
\]
Hence we may conclude that $M(\sigma\ast 0)+M(\sigma\ast 1)\leq M(\sigma)$ for all $\sigma$, which means
that $M$ is a \ce $g$-granular supermartingale. Also note that
\[
\sum_{i>|\sigma|} 2^{-g(i)} + N(\sigma)
\leq M(\sigma) +2^{-g(|\sigma|)}
\hspace{0.5cm}
\textrm{and}
\hspace{0.5cm}
M(\sigma) \leq 
\sum_{i>|\sigma|} 2^{-g(i)} + N(\sigma).
\]
The first inequality shows that  if  $\limsup_s N(X\restr_n)=\infty$
for some $X$, then $\limsup_s N(X\restr_n)=\infty$.
The second inequality above shows that  
if $\limsup_s M(X\restr_n)=\infty$ for some $X$ 
then $\limsup_s N(X\restr_n)=\infty$, which concludes
the proof of Theorem \ref{Wab5YAaQ7s}.

\section{Lower bounds on the redundancy in computation from random reals}\label{V1dKIq77PY}
In this section we give proofs of Theorems \ref{tbG7BLsZ}, \ref{cdUMjzgrN} and 
\ref{2NY34dfIMG}, using the result we now have for  restricted
betting strategies.
We start with the definition of redundancy, following G\'{a}cs \cite{MR859105}.

\begin{defi}[Oblivious use-function and redundancy]
We say that $f$ is a use-function of the Turing functional $\Phi$ if for every $X$ and $n$,
during the computation $\Phi^X(n)$ (whether it halts or not) all bits of $X$ that are
queried are smaller than $f(n)$. In this case we say that $\max\{f(n)-n,0\}$ is a redundancy of $\Phi$.
\end{defi}
Note that this definition is oblivious to the oracle $X$, a choice which reflects the fact that
we are interested in general upper bounds for the \KG theorem.
Clearly, given a Turing functional, there are many choices for its use function and its
redundancy. However we are generally interested in minimising the use-function and
the redundancy of computations. Moreover, we only consider use-functions and redundancy functions
which are computable and nondecreasing.
Given a Turing functional $\Phi$ with nondecreasing computable use-function $f$, we may view $\Phi$ as a partial computable function
which maps strings of length $f(n)$ to strings of length $n$ (for each $n$).
The following fact links Turing reductions with supermartingales.

\begin{lem}[Supermartingales from Turing functionals]\label{znvNmmPcdY}
Let $\Phi$ be a Turing functional with computable nondecreasing redundancy $g$,
and for each string $\nu$ let $h(\nu)$ be the number of strings $\tau$ of length
$|\nu|+g(|\nu|)$ such that $\Phi^{\tau}=\nu$.
Then the function  $h^{\ast}(\nu):=2^{-g(|\nu|)}\cdot h(\nu)$
is a strongly $g$-granular \ce supermartingale.
\end{lem}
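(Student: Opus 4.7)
The plan is to verify the three requirements in turn: $g$-granularity, left-c.e.\ approximability, and the supermartingale inequality. Granularity is immediate from the definition, since $h(\nu)$ is a non-negative integer and hence $h^{\ast}(\nu)=2^{-g(|\nu|)}h(\nu)$ is an integer multiple of $2^{-g(|\nu|)}$. For left-c.e.-ness, observe that for each $\nu$ of length $n$, a string $\tau$ of length $n+g(n)$ belongs to $P_\nu:=\{\tau: \Phi^{\tau}=\nu\}$ precisely when the computations $\Phi^{\tau}(i)$ converge to $\nu(i)$ for all $i<n$; running these finitely many computations in parallel over all stages gives a uniformly computable nondecreasing approximation to $h(\nu)$, and hence to $h^{\ast}(\nu)$. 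Once the supermartingale inequality is established, strong $g$-granularity comes for free: $h^{\ast}$ is itself $g$-granular and left-c.e., so Definition \ref{tkUAG5IyAt} applies with the auxiliary sequence $(q_i)$ identically zero.

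The substantive step is the supermartingale inequality $h^{\ast}(\nu\ast 0)+h^{\ast}(\nu\ast 1)\leq 2h^{\ast}(\nu)$. Fix $\nu$ of length $n$, and set $a=g(n)$ and $b=g(n+1)$, so $b\geq a$ since $g$ is nondecreasing. The key observation is oracle-use accounting: any $\tau'$ of length $(n+1)+b$ with $\Phi^{\tau'}$ extending $\nu$ must, when restricted to its first $n+a$ bits, lie in $P_\nu$. This is because the bits of the oracle queried during the computation of the first $n$ output bits sit at positions strictly less than $f(n-1)=(n-1)+g(n-1)\leq n+a$, so those queries are answered identically by $\tau'\upharpoonright (n+a)$ and by $\tau'$. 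Moreover $P_{\nu\ast 0}$ and $P_{\nu\ast 1}$ are disjoint since $\Phi^{\tau'}$ is a partial function. Hence
\[
|P_{\nu\ast 0}|+|P_{\nu\ast 1}|\;\leq\;|P_{\nu}|\cdot 2^{(n+1+b)-(n+a)}\;=\;|P_{\nu}|\cdot 2^{\,1+b-a},
\]
and multiplying by $2^{-b}$ gives $h^{\ast}(\nu\ast 0)+h^{\ast}(\nu\ast 1)\leq 2^{1-a}|P_{\nu}|=2h^{\ast}(\nu)$, as required.

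There is no serious obstacle here; the lemma is essentially a bookkeeping statement bridging Turing functionals with restricted use and granular supermartingales. The one point requiring care is the use-function accounting at the boundary, \ie making sure that "use at most $f(n)$" is interpreted consistently (bits indexed strictly below $f(n)$), so that the first $n+g(n)$ bits of any oracle extending to length $(n+1)+g(n+1)$ already determine the first $n$ output bits. Once that convention is pinned down at the start of the argument, the counting inequality above is the whole content of the proof.
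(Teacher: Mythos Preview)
Your proof is correct and follows essentially the same route as the paper: the paper derives $h(\nu 0)+h(\nu 1)\leq 2^{1+g(|\nu|+1)-g(|\nu|)}h(\nu)$ by exactly the oracle-use counting you spell out, then multiplies by $2^{-g(|\nu|+1)}$ to get the supermartingale inequality, and notes granularity and left-c.e.-ness for the same reasons you give. Your treatment is slightly more explicit about the use-function bookkeeping (which the paper leaves implicit in the phrase ``since $\Phi$ is a Turing functional''), but the argument is the same.
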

\begin{proof}
Since $\Phi$ is a Turing functional,  we have
$h(\nu0)+h(\nu1)\leq 2^{|\nu|+1+g(|\nu|+1)-|\nu|-g(|\nu|)}\cdot h(\nu)$ and so
\[
h(\nu0)+h(\nu1)\leq 2^{1+g(|\nu|+1)-g(|\nu|)}\cdot h(\nu)
\hspace{0.6cm}
\textrm{for all strings $\nu$.}
\]
Since $h$ is an integer-valued function,
$h^{\ast}$ is a $g$-granular function. Moreover:
\[
h^{\ast}(\nu0)+h^{\ast}(\nu1)=
2^{-g(|\nu|+1)}\cdot \big(h(\nu0)+h(\nu1)\big)\leq
2^{-g(|\nu|+1)}\cdot 2^{1+g(|\nu|+1)-g(|\nu|)}\cdot h(\nu)=
2\cdot h^{\ast}_e(\nu).
\]
So $h^{\ast}$ is a strogly $g$-granular supermartingale.
Finally $h$ is a \lce function, because $\Phi$ is a Turing functional. So $h^{\ast}$ is a \lce
function, which concludes the proof.
\end{proof}

Lemma \ref{znvNmmPcdY} establishes a method for  constructing 
supermartingales from Turing reductions. Restricted wagers in the supermartingales constructed 
correspond to upper bounds on the oracle-use of the Turing reductions they are built from.
Similar arguments have been used in
\cite{MR2286414} and  \cite[Theorem 9.13.2]{rodenisbook},  for the special case of
integer-valued martingales and Turing reductions with constant redundancy. The underlying
general topic here is the connection between Turing functionals and semi-measures, which 
was explored in \cite{MR0307889}. For a recent account of this topic the reader is 
referred to \cite{BHPSr}, while \cite{rodenisbook} also contains related material
in various sections of Chapters 3,6 and 7.

We are now ready to apply 
Lemma \ref{U26qaoGM} in order to prove a density lemma,  which will be
the basis of an inductive construction specifying the reals required by
Theorem \ref{tbG7BLsZ}.   The proof of Theorem \ref{cdUMjzgrN} also establishes
Theorem \ref{tbG7BLsZ}, but is slightly more involved than a direct proof of the latter.
 We therefore choose to give a simple proof of 
Theorem \ref{tbG7BLsZ} in Section \ref{RW7DicsV4p}, before expanding that proof 
in order to obtain a proof of Theorem \ref{cdUMjzgrN}.

\subsection{Proof of Theorem \ref{tbG7BLsZ}}\label{RW7DicsV4p}
We use an effective forcing or finite extension argument, based on the following fact.
\begin{lem}[Density lemma]\label{XYRkqrIUo}
Let $\Phi$ be a Turing functional with redundancy a computable nondecreasing
function $g$ such that $\sum_i 2^{-g(i)}=\infty$.
Given any  $c\in \mathbb{N}$ and any finite string $\nu_0$, 
there exists an extension $\nu\supset\nu_0$ such that $K(\mu)< |\mu|-c$
for every string $\mu$ of length $|\nu|+g(|\nu|)$ for which $\Phi^{\mu}=\nu$.

\end{lem}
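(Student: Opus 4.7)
The plan is to combine three ingredients that are already available: Lemma \ref{znvNmmPcdY}, which turns $\Phi$ into the strongly $g$-granular \ce supermartingale $h^{\ast}(\nu)=2^{-g(|\nu|)}h(\nu)$ counting length-$(|\nu|+g(|\nu|))$ preimages of $\nu$; Lemma \ref{U26qaoGM}, which produces a real $X\supseteq\nu_0$ along which $h^{\ast}$ stays bounded but a related $(g+1)$-granular \ce supermartingale $N$ grows without bound; and Schnorr's Lemma \ref{QhFm7xcjbM}, which converts large supermartingale values into prefix-free compression. The desired witness $\nu$ will be $X\restr_n$ for some sufficiently large $n$.

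Concretely, applying the first two lemmas to $h^{\ast}$ with starting string $\nu_0$ yields $X\supseteq\nu_0$ and a $(g+1)$-granular \ce supermartingale $N$ satisfying
\[
h^{\ast}(X\restr_n)\leq h^{\ast}(\nu_0)
\qquad\text{and}\qquad
N(X\restr_n)\geq\sum_{i\leq n}2^{-g(i)-1}
\]
for all $n\geq|\nu_0|$. Together with the hypothesis $\sum_i 2^{-g(i)}=\infty$, this gives two facts I shall use: the number of length-$(n+g(n))$ preimages of $X\restr_n$ satisfies $h(X\restr_n)\leq 2^{g(n)}h^{\ast}(\nu_0)$, and $N(X\restr_n)\to\infty$.

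Next I would build a prefix-free machine $V$ that describes each preimage $\mu$ of $\sigma$ by pairing a $U$-program for $\sigma$ with a fixed-length index into the preimage set $P_\sigma=\{\mu:|\mu|=|\sigma|+g(|\sigma|),\Phi^\mu=\sigma\}$. On input $\tau$, $V$ reads an initial segment of $\tau$ as a $U$-program that outputs some $\sigma$, then reads exactly $L(\sigma):=\lceil g(|\sigma|)+\log h^{\ast}(\nu_0)\rceil$ further bits of $\tau$ as an index $j$, enumerates $P_\sigma$ by running $\Phi$ in parallel on every string of length $|\sigma|+g(|\sigma|)$, and outputs the $j$-th preimage so discovered. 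Since $|\sigma|$ is recoverable from $\sigma$, the length $L(\sigma)$ is computable from $\sigma$, so $V$ is a well-defined prefix-free machine; universality of $U$ then gives
\[
K(\mu)\leq K(\sigma)+g(|\sigma|)+\log h^{\ast}(\nu_0)+C
\]
for every $\mu\in P_\sigma$ with $h(\sigma)\leq 2^{L(\sigma)}$, where the constant $C$ depends only on $\Phi$ and $\nu_0$.

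Finally I would apply Lemma \ref{QhFm7xcjbM} to the \ce supermartingale $N$ with the constant $c':=c+\lceil\log h^{\ast}(\nu_0)\rceil+C+1$ to obtain a threshold $k$ such that $N(\sigma)\geq k$ implies $K(\sigma)\leq|\sigma|-c'$. Picking any $n$ with $N(X\restr_n)\geq k$ (which exists since $N(X\restr_n)\to\infty$) and setting $\nu:=X\restr_n$, the bound $h(\nu)\leq 2^{g(n)}h^{\ast}(\nu_0)\leq 2^{L(\nu)}$ lets us combine the two displayed inequalities to obtain, for every $\mu\in P_\nu$,
\[
K(\mu)\leq(n-c')+g(n)+\log h^{\ast}(\nu_0)+C\leq|\mu|-c-1<|\mu|-c,
\]
as required. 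The main obstacle is keeping the prefix-free overhead in the $(\sigma,\text{index})$ code independent of $n$; this is precisely what the design of $V$ is meant to achieve, exploiting the uniform bound on $h(X\restr_n)$ from Lemma \ref{U26qaoGM} to collapse the overhead to a single additive constant that can then be absorbed into $c'$ before invoking Schnorr's lemma.
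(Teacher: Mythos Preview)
Your proposal is correct and follows essentially the same route as the paper: convert $\Phi$ to the $g$-granular \ce supermartingale $h^{\ast}$ via Lemma~\ref{znvNmmPcdY}, apply Lemma~\ref{U26qaoGM} at $\nu_0$ to obtain a real along which $h^{\ast}$ stays bounded while a \ce supermartingale $N$ diverges, build the same self-delimiting machine that describes any preimage $\mu$ of $\sigma$ by a $U$-program for $\sigma$ followed by roughly $g(|\sigma|)+d$ index bits, and finish by picking $n$ where $K(X\restr_n)$ is sufficiently small. The only cosmetic difference is that the paper phrases the last step as ``$Z$ is not \ml random, so pick $n$ with $K(Z\restr_n)<n-c_0-d-c$'' whereas you invoke Lemma~\ref{QhFm7xcjbM} on $N$ directly; these are the same mechanism (and you should handle the trivial case $h^{\ast}(\nu_0)=0$, where $\log h^{\ast}(\nu_0)$ is undefined but the conclusion is vacuous).
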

\begin{proof}
Given the functional $\Phi$, recall the associated functions $h, h^{\ast}$ from 
Lemma \ref{znvNmmPcdY}. 
Since $\Phi$ has redundancy $g$, it follows that 
$h^{\ast}$ is a $g$-granular \ce supermartingale.
Then given $\nu_0$, by
Lemma \ref{U26qaoGM} it follows that
 there exists a constant $d$, a \ce supermartingale $N$ and a real $Z\supset \nu_0$, such that
$h^{\ast}(Z\restr_n)<2^d$ for all $n$ and $N$ succeeds on $Z$ . By the
characterization of \ml randomness in terms of \ce supermartingales, 
it follows that
the real $Z$  
is not \ml random. Moreover
since $h^{\ast}(Z\restr_n)<2^d$ for all $n$, given the definition of $h^{\ast}$ we have
that $h(Z\restr_n)< 2^{d+g(n)}$ for all $n$.
We claim that  there exists a \pf machine $M$
such that:
\begin{equation}\label{1getCpYlbn}
\forall n\in\Nat, \mu\in 2^{n+g(n)}\hspace{0.3cm}\left(\Phi^{\mu}=Z\restr_n
\hspace{0.3cm}
\Rightarrow 
\hspace{0.3cm}
K_M(\mu)\leq K(Z\restr_n)+g(n)+d\right).
\end{equation}
The machine $M$ is defined in the following self-delimiting way.
Given a program $\sigma$, $M$ first looks for an initial segment $\sigma_0$ of $\sigma$
which is in the domain of the universal \pf machine $U$. If and when it finds $\sigma_0$, 
$M$ calculates $\tau=U(\sigma_0)$ -- one can think of the machine as  interpreting this string $\tau$ as
 $Z\restr_n$. It then calculates $g(n)$ (where $n$ is the length of $\tau$)
and reads $\sigma_1$, which is the following $g(n)+d$ bits of $\sigma$ (starting from bit $|\sigma_0|+1$).
If $\sigma$ does not have sufficiently many bits that $\sigma_1$ is defined then $M$ loops indefinitely. Otherwise, $M$ interprets
the string $\sigma_1$  as a number $t\leq 2^{d+g(n)}$.
It then interprets the number $t$ as the priority index of a string $\mu$
in the universal enumeration of strings $\rho$ such that
$\Phi^{\rho}=\tau$. In other words, $M$ runs this universal enumeration 
and starts producing the computably enumerable sequence of strings $\rho$ with
$\Phi^{\rho}=\tau$, stopping at the
$t$th such string $\mu$. If there are less than 
$t$ many strings $\rho$ such that $\Phi^{\rho}=\tau$, then $M$ loops indefinitely.
Finally $M$ assigns $\sigma_0\ast \sigma_1$ as a description of $\mu$ (\ie we define
$M(\sigma_0\ast \sigma_1)=\mu$). Since
$U$ is \pf and the length of $\sigma_1$ is determined by $\sigma_0$, the machine $M$
is \pfn.
Moreover, given a real $Z$ such that  
$h(Z\restr_s)< 2^{d+g(s)}$ for all $s$, $M$ will describe every string $\mu$
such that $\Phi^{\mu}=Z\restr_n$
with a string of length $K(Z\restr_n)+g(n)+d$. Indeed, by the property
$h(Z\restr_n)< 2^{d+g(n)}$, if $M$ is given as an input the concatenation of
a description of $Z_n$ and a string of length $d+g(n)$ which codes the priority index of string 
$\mu$ in the enumeration of all strings $\rho$ with  
$\Phi^{\rho}=Z\restr_n$, it will follow the steps above, and will eventually output the string $\mu$.
This completes the proof of \eqref{1getCpYlbn}.

We can now use our assumption that $Z$ is not \ml random in order to complete the proof of the lemma.
Since $M$ is a \pfm there exists some constant $c_0$ such that 
$K(\rho)<K_M(\rho)+c_0$ for all strings $\rho$.
Since $Z$ is not \ml random we can choose some $n>|\nu_0|$ such that $K(Z\restr_n)<n-c_0-d-c$.
Then given any string $\mu$ of  length $n+g(n)$  such that $\Phi_e^{\mu}=Z\restr_n$,
according to \eqref{1getCpYlbn} 
we have:
\[
K(\mu)< K(Z\restr_n)+g(n)+d+c_0 < n -d-c_0-c +g(n)+d+c_0 =n+g(n)-c=|\mu|-c.
\]
This concludes the proof of the lemma.
\end{proof}

Let $(\Phi_e, g_e)$ an effective enumeration of all pairs of Turing functionals $\Phi$ and
partial computable nondecreasing functions $g$ which are a redundancy function for $\Phi$.
This means that for each $e,n$ and each oracle $X$, 
if $\Phi^X_e(n)$ is defined then $g_e(n)$ is defined and the oracle-use
in the computation $\Phi^X_e(n)$ is bounded above by $n+g_e(n)$.
Let $I$ contain the indices $e$ such that $g_e$ is a total function with
$\sum_i 2^{-g(i)}=\infty$. For each $e$, let
$S(e,c)$ be the set of strings $\nu$ with the property that for all $\mu$ of length
$|\nu|+g(|\nu|)$ such that $\Phi_e^{\mu}=\nu$ we have $K(\mu)<|\mu|-c$.
Then the sets $S(e,c)$ are $\Sigma^0_2$, uniformly in $e,c$.
By Lemma \ref{XYRkqrIUo}, for each $e\in I$ and all $c$ the set $S(e,c)$ is dense,
\ie every string has an extension in $S(e,c)$.
Therefore every weakly 2-generic real has a prefix in $S(e,c)$, for each $e\in I$ and each $c$.
Theorem \ref{tbG7BLsZ} follows directly from this fact.

\subsection{Essential part of the proof of Theorem \ref{cdUMjzgrN}}\label{q4yEFszH5d}
We denote Turing reducibility by $\leq_T$.
In this section we show how to effectivize the argument of Section \ref{RW7DicsV4p} in order
to obtain a set $X\leq_T \emptyset'$ with the properties of Theorem \ref{tbG7BLsZ}.
Then in Section \ref{9VbXi7tSP7} we use  standard computability-theoretic
apparatus in order to show that for any given 
set $A$ which is generalized non-low$_2$ there exists such an $X$ with $X\leq_T A$,  thus completing the proof of Theorem \ref{cdUMjzgrN}.

In the proof of Lemma \ref{ImUc5BvB} we will need to restrict the enumeration
of strings $\mu$ such that $\Phi^{\mu}=\nu$. 
The following definition
introduces some notation for imposing such restrictions. 

\begin{defi}[Restricted enumeration of $\Phi$]\label{UiWgIw6Zw6}
Let $\Phi$ be a Turing functional with redundancy $g$.
Given any any string $\nu$, let
$\mathcal{Q}_0(\nu)$ be the set of all
strings $\mu$ of length $|\nu|+g(|\nu|)$ such that
$\Phi^{\mu}=\nu$, and let $\mathcal{Q}_0(\nu)[s]$ be a computable enumeration of this set. 
For each $d\in \mathbb{N}$ define
\[
\mathcal{Q}(d,\nu)=
\begin{cases}
\mathcal{Q}_0(\nu),&\textrm{if $|\mathcal{Q}_0(\nu)|<2^{d+g(|\nu|)}$;}\\
\mathcal{Q}_0(\nu)[s(d,\nu)], &\textrm{otherwise}.
\end{cases}
\]
where $s(d,\nu)$ is the largest stage such that 
$|\mathcal{Q}_0(\nu)|[s]<2^{d+g(|\nu|)}$, in the case where
$|\mathcal{Q}_0(\nu)|[s]|\geq 2^{d+g(|\nu|)}$, and $s(d,\nu)$ is undefined otherwise.
\end{defi}

Note that $\mathcal{Q}(d,\nu)$ is  uniformly \ce in $\Phi,g,\nu,d$. Of course the definition of $\mathcal{Q}(d,\nu)$ also depends upon $\Phi$ and the redundancy $g$, but these inputs will always be clear from context and so we suppress them for the sake of tidy notation.   
The following lemma will also be used in the proof of Lemma 
\ref{ImUc5BvB}.

\begin{lem}[Turing functionals and \pf complexity]\label{oAGsjj2ali}
Let $\Phi$ be a Turing functional with redundancy $g$.
There exists a \pf machine $M$
such that 
\begin{equation*}
\forall d,\nu\ \ \forall \mu\in \mathcal{Q}(d,\nu)\hspace{0.3cm}
\Big(\Phi^{\mu}=\nu
\hspace{0.3cm}\Rightarrow \hspace{0.3cm}
K_M(\mu)\leq K(\nu)+g(|\nu|)+d\Big).
\end{equation*}
Moreover an index for $M$ is uniformly computable from indices for $\Phi,g$.
\end{lem}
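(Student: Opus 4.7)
The plan is to generalize the self-delimiting construction used in the proof of Lemma \ref{XYRkqrIUo} so that the resulting prefix-free machine handles all pairs $(d,\nu)$ simultaneously, rather than a single fixed $d$. In the earlier argument, the machine first reads a $U$-description of $\nu$ and then reads exactly $g(|\nu|)+d$ further bits specifying a priority into the enumeration of $\mathcal{Q}_0(\nu)$; here $d$ is itself a parameter that must be recovered from the input of the machine we need to build.

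Concretely, I would define $M$ as follows. On input $\sigma$, the machine simulates $U$ on prefixes of $\sigma$ in parallel to locate the unique prefix $\sigma_0$ of $\sigma$ with $U(\sigma_0)\de=\nu$, which is unambiguous by prefix-freeness of $\mathrm{dom}(U)$. After setting $n=|\nu|$ and computing $g(n)$, the remaining suffix $\sigma_1=\sigma\setminus\sigma_0$ must specify which element of $\mathcal{Q}_0(\nu)$ to output. For this internal prefix-free coding I would run the Kraft--Chaitin online algorithm in parallel with the enumeration $\mathcal{Q}_0(\nu)[s]$ of Definition \ref{UiWgIw6Zw6}: whenever a new $\mu$ appears at position $t$, I compute the least $d^*=d^*(\mu,\nu)$ with $\mu\in\mathcal{Q}(d^*,\nu)$ and submit a Kraft--Chaitin request of length $g(|\nu|)+d^*$ for $\mu$, using the resulting codeword as a legal choice of $\sigma_1$. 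The key combinatorial input is $|\mathcal{Q}(d,\nu)|<2^{d+g(|\nu|)}$, which provides enough Kraft budget at each level for the corresponding requests to be served by prefix-free codewords of the prescribed lengths. The resulting $M$ is prefix-free, and its index is uniformly computable from those of $\Phi$ and $g$, since every step of the construction is effective in these parameters.

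To verify the bound, fix $\mu\in\mathcal{Q}(d,\nu)$ with $\Phi^\mu=\nu$. Let $\sigma_0^*$ be a shortest $U$-description of $\nu$, so $|\sigma_0^*|=K(\nu)$, and let $\sigma_1^*$ be the Kraft--Chaitin codeword assigned to $\mu$, whose length is at most $g(|\nu|)+d^*(\mu,\nu)\leq g(|\nu|)+d$. Then $M(\sigma_0^*\sigma_1^*)=\mu$, and hence $K_M(\mu)\leq K(\nu)+g(|\nu|)+d$ as required. The main obstacle is organizing the Kraft--Chaitin requests per $\nu$ so that codes assigned at different levels $d$ are simultaneously prefix-free in the $\sigma_1$ slot; this is handled by scheduling them level-by-level using $|\mathcal{Q}(d,\nu)|<2^{d+g(|\nu|)}$ as the per-level Kraft budget, mirroring the budgeting already used in the proof of Lemma \ref{XYRkqrIUo} but now applied uniformly over all $d$.
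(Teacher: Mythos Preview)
Your overall plan matches the paper's intent: read a $U$-description $\sigma_0$ of $\nu$, then a code $\sigma_1$ specifying the position of $\mu$ in the enumeration of $\mathcal{Q}_0(\nu)$; the paper's proof is literally the single sentence ``as in the proof of Lemma~\ref{XYRkqrIUo}''. Your contribution over that sentence is to notice that the earlier construction fixed $d$, whereas here $d$ is universally quantified, and to propose Kraft--Chaitin for the $\sigma_1$-phase. The gap is in the budgeting. You assert that $|\mathcal{Q}(d,\nu)|<2^{d+g(|\nu|)}$ gives enough Kraft budget ``at each level'', but Kraft--Chaitin needs the \emph{total} weight over all levels to be at most~$1$. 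Assuming one element enters $\mathcal{Q}_0(\nu)$ per stage, the element at position $t$ has $d^\ast$ equal to the least $d$ with $t<2^{d+g(n)}$, so for each $d\ge 1$ there are about $2^{d-1+g(n)}$ positions with $d^\ast=d$, each requesting weight $2^{-g(n)-d}$; that is weight roughly $1/2$ per level, and the number of levels can be as large as $|\nu|$ when $|\mathcal{Q}_0(\nu)|$ is near $2^{|\nu|+g(|\nu|)}$. Your per-$\nu$ Kraft--Chaitin instance therefore has total weight on the order of $|\nu|$ and cannot be served.

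This obstruction is not an artefact of your particular encoding. Taking $\Phi^\mu=0^n$ for every $\mu$ of length $n+g(n)$ forces any prefix-free $M$ meeting the stated bound (for \emph{every} $d$ with $\mu\in\mathcal{Q}(d,\nu)$, hence for $d=d^\ast$) to have Kraft sum at least a constant times $\sum_n n\,2^{-K(0^n)}$, which diverges since $K(0^n)\le\log n+2\log\log n+O(1)$. So the lemma as literally written cannot hold, and the paper's one-line proof does not explain how the fixed $d$ of Lemma~\ref{XYRkqrIUo} becomes a universally quantified parameter. What \emph{does} go through, and what suffices for the application in Lemma~\ref{ImUc5BvB}, is the weaker bound $K_M(\mu)\le K(\nu)+g(|\nu|)+d+O(\log d)$: have $M$ read $\sigma_0$, then a self-delimiting encoding of $d$ of length $O(\log d)$, then $g(n)+d$ position bits exactly as in Lemma~\ref{XYRkqrIUo}. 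The extra $O(\log d(\nu_0))$ is then absorbed harmlessly into the definition of $f$ in that proof.
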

\begin{proof}
Such a machine $M$ can be constructed  as in the proof of Lemma \ref{XYRkqrIUo}.  
\end{proof}

\begin{lem}[Effective density lemma]\label{ImUc5BvB}
Let $\Phi$ be a Turing functional  with computable nondecreasing redundancy $g$.
There exists a computable function $f$  such that for every $c,\nu_0$:
\begin{equation}\label{99fMzMwUT5}
\sum_i 2^{-g(i)}>f(c,\nu_0)\Rightarrow
\Big[\exists \nu\supset\nu_0\ \forall\mu\in 2^{|\nu|+g(|\nu|)}\ 
\Big(\Phi^{\mu}=\nu\Rightarrow K(\mu)< |\mu|-c\Big)\Big].
\end{equation}
Moreover an index of $f$ can be obtained effectively from
 indices for $\Phi,g$.
\end{lem}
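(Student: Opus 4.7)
The plan is to re-run the argument of Lemma \ref{XYRkqrIUo} while making every hidden constant explicit and uniformly computable in $c$, $\nu_0$, $\Phi$ and $g$, and to replace its appeal to the non-randomness of $Z$ with a quantitative invocation of Schnorr's Lemma \ref{QhFm7xcjbM}.

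First I would form the strongly $g$-granular \ce supermartingale $h^{\ast}$ of Lemma \ref{znvNmmPcdY} and feed it, together with the string $\nu_0$, into Lemma \ref{U26qaoGM}. This produces a real $Z\supset\nu_0$ and a $(g+1)$-granular \ce supermartingale $N$ such that $h^{\ast}(Z\restr_n)\leq h^{\ast}(\nu_0)$ and $N(Z\restr_n)\geq \sum_{i\leq n} 2^{-g(i)-1}$ for all $n\geq |\nu_0|$. Using the trivial bound $h^{\ast}(\nu_0)\leq 2^{|\nu_0|}$, I would set $d:=|\nu_0|+1$, so that $h(Z\restr_n)<2^{d+g(n)}$ all along $Z$ and hence, in the notation of Definition \ref{UiWgIw6Zw6}, $\mathcal{Q}(d,Z\restr_n)=\mathcal{Q}_0(Z\restr_n)$. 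Consequently the prefix-free machine $M$ supplied by Lemma \ref{oAGsjj2ali} describes every preimage of $Z\restr_n$ under $\Phi$ with a program of length at most $K(Z\restr_n)+g(n)+d$.

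Next I would apply Lemma \ref{QhFm7xcjbM} to $N$ with parameter $c+c_0+d$, where $c_0$ is the universal-machine overhead for this particular $M$, to obtain an effectively computable threshold $k$ such that $N(\sigma)\geq k$ implies $K(\sigma)\leq |\sigma|-c-c_0-d$. I would then set $f(c,\nu_0):=2k$. If $\sum_i 2^{-g(i)}>2k$, the monotonicity of the partial sums delivers some $n\geq|\nu_0|$ at which $N(Z\restr_n)\geq k$, and hence $K(Z\restr_n)\leq n-c-c_0-d$. Combining this with $K\leq K_M+c_0$ and the description bound from Lemma \ref{oAGsjj2ali}, I would conclude that every $\mu$ of length $n+g(n)$ with $\Phi^{\mu}=Z\restr_n$ satisfies $K(\mu)\leq |\mu|-c$, so that $\nu:=Z\restr_n$ witnesses \eqref{99fMzMwUT5}.

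The main thing to watch is that every ingredient entering $f(c,\nu_0)$ must be uniformly computable from indices for $\Phi$ and $g$: the margin $d$ is given by the explicit formula $d=|\nu_0|+1$; the overhead $c_0$ is supplied by the uniformity clause of Lemma \ref{oAGsjj2ali}; an index for $N$ (and a bound on $N(\lambda)$) comes from the uniformity observation following Lemma \ref{U26qaoGM}; and the Schnorr threshold $k$ is produced uniformly by Lemma \ref{QhFm7xcjbM}. Unwinding these dependencies is the central bookkeeping of the proof, and it is precisely what yields the moreover-clause.
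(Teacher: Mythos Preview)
Your proposal is correct and follows essentially the same approach as the paper: both arguments feed the granular supermartingale $h^{\ast}$ of Lemma~\ref{znvNmmPcdY} into Lemma~\ref{U26qaoGM} to obtain $N$ and $Z$, invoke Schnorr's Lemma~\ref{QhFm7xcjbM} quantitatively to get a complexity drop on some $Z\restr_n$, and then push that drop through the machine $M$ of Lemma~\ref{oAGsjj2ali} to the $\Phi$-preimages. The only differences are cosmetic---you fix $d=|\nu_0|+1$ via the trivial bound $h^{\ast}(\nu_0)\leq 2^{|\nu_0|}$ and name the universal overhead $c_0$, whereas the paper leaves $d(\nu_0)$ as any effectively found upper bound on $h^{\ast}(\nu_0)$ and writes the overhead as $m+1$---but the structure and the uniformity bookkeeping are the same.
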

\begin{proof}
Given $\Phi,g$ as in the hypothesis, consider the functions $h, h^{\ast}$ of 
Lemma \ref{znvNmmPcdY}. By
Lemma \ref{U26qaoGM}, for each string $\nu_0$ 
there exists a constant $d=d(\nu_0)$ (an example of which we can find effectively since all that is required is an upper bound for $h^{\ast}(\nu_0)$) and there exist
a \lce supermartingale $N$ and a real $Z$ such that:
\[
\Big(Z\supset\nu_0\Big)
\hspace{0.2cm}\bigwedge \hspace{0.2cm}
\Big(\textrm{$h(Z\restr_n)< 2^{d(\nu_0)+g(n)}$ \hspace{0.1cm} for all $n$}\Big) 
\hspace{0.2cm}\bigwedge\hspace{0.2cm}
\Big(N(Z\restr_n)\geq \sum_{i=0}^n 2^{-g(i)}
\hspace{0.2cm}
\textrm{for all $n>|\nu_0|$}\Big).
\]
Note that the second clause of the conjunction above follows since 
$h^{\ast}(Z\restr_n)<2^d$ implies
$h(Z\restr_n)< 2^{d+g(n)}$.
Moreover, as observed in Section \ref{RBq4QNezFK},
an index for $N$ (together with an upper bound for $N(\lambda)$) can be obtained effectively from $\nu_0$ and indices for $\Phi$ and $g$.
So by Lemma \ref{QhFm7xcjbM}, there exists a computable function
$f_0$ (whose index is computable from the indices of $\Phi,g$)
such that for all $t\in\Nat$ we have:
\begin{equation}\label{jLEez1kHp3}
\sum_i 2^{-g(i)}>f_0(c,\nu_0)\Rightarrow
\exists \nu\supset\nu_0\ \Big(h(\nu)<2^{d(\nu_0)+g(|\nu|)}\ \wedge\ K(\nu)<|\nu|-c\Big).
\end{equation}
Now consider the machine $M$ of Lemma \ref{oAGsjj2ali}, and let $m$ be its
index, which is a computable function of indices for  $\Phi,g$.
We define $f(c,\nu_0)=f_0(c+m+1+d(\nu_0),\nu_0)$ for each $c,\nu_0$, 
and show that $f$ meets 
condition \eqref{99fMzMwUT5}.
Given our choice for the 
underlying universal \pf machine $U$,  we have
$K(\rho)<K_M(\rho)+m+1$ for all strings $\rho$.
Fix $c,\nu_0$ and assume that the left-hand-side of \eqref{99fMzMwUT5} holds.
Then by \eqref{jLEez1kHp3} and the definition of $f$, there
exists an extension $\nu$ of $\nu_0$
such that $K(\nu)<|\nu|-c-d(\nu_0)-m-1$ and $h(\nu)<2^{d(\nu_0)+g(|\nu|)}$.
By Definition \ref{UiWgIw6Zw6} the latter inequality implies that
\[
\big\{\mu\in 2^{|\nu|+g(|\nu|)}\ |\ \Phi^{\mu}=\nu\big\}=\mathcal{Q}(d(\nu_0),\nu).
\]
From
Lemma \ref{oAGsjj2ali} it follows  
that for all strings $\mu$ of length $|\nu|+g(|\nu|)$ with $\Phi^{\mu}=\nu$:
\[
K_M(\mu)\leq 
K(\nu)+g(|\nu|)+d(\nu_0)\leq 
\big(|\nu|-c-m-1-d(\nu_0)\big) +g(|\nu|)+d(\nu_0). 
\]
This establishes that
$K(\mu)\leq |\nu|+g(|\nu|)-c=|\mu|-c$.
Finally observe that $f$ is obtained effectively from indices for $\Phi$ and $g$, which
concludes the proof of the lemma.
\end{proof}

From the above proof and 
Lemma \ref{QhFm7xcjbM} we can see that
Lemma \ref{ImUc5BvB} also holds for partial computable functions $g$
in the following sense. 
Let $(\Phi_e, g_e)$ an effective enumeration of all pairs of Turing functionals $\Phi$ and
partial computable nondecreasing functions $g$.
\begin{equation}\label{2LHcslocWH}
\parbox{14cm}{There exists a computable function $f_{\ast}$ such that for each $e,c,\nu_0,k$, if
$g_e(i)\de$ for all $i\leq k$ and $\sum_{i=0}^k 2^{-g_e(i)}>f_{\ast}(e,c,\nu_0)$
then there exists an extension $\nu$ of $\nu_0$ of length $k$ such that
$K(\mu)< |\mu|-c$ for all strings $\mu$ of length
$|\nu|+g_e(|\nu|)$ such that $\Phi_e^{\mu}=\nu$.}
\end{equation}

Note that $f$ of Lemma \ref{ImUc5BvB} had two arguments, while $f_{\ast}$ has three arguments,
as it deals with every potential redundancy function $g_e$.
We are now ready to prove Theorem \ref{cdUMjzgrN}.
Given Lemma \ref{ImUc5BvB}, this is a standard argument in computability theory.
We first describe the construction of a real $X\leq_T \emptyset'$ 
which meets the requirements
of the theorem, which are:
\[
\mathcal{R}_{e,c}:\ 
\textrm{If $g_e$ is total and \ \ $\sum_i 2^{-g_e(i)}=\infty$\ \ \  \ then\ \ \ }
\forall Y\ \  \Big(\Phi_e^Y=X\Rightarrow\exists n\ \ K(Y\restr_n)\leq n-c
\Big).
\]
The proofs of the full claims of  
Theorems \ref{cdUMjzgrN} and \ref{2NY34dfIMG}, 
regarding generalized non-low$_2$ and array noncomputable sets, will be
modifications of this simpler case.
Let $\tuple{.,.}:\Nat\times\Nat\to\Nat$ be a computable bijection.
We describe a finite extension construction for the set $X$, which is computable from
the halting problem. The main issue here is that the left-hand-side of the 
outer implication in $\mathcal{R}_{e,c}$ is not computable from the halting problem.
This is the reason why we need Lemma \ref{ImUc5BvB} and \eqref{2LHcslocWH},
and not just Lemma \ref{XYRkqrIUo}.

We define a monotone sequence of strings $(\sigma_i)$, beginning with the empty string
$\sigma_0$ and eventually defining $X=\cup_i\sigma_i$. At stage $\tuple{e,c}+1$ we meet
$\mathcal{R}_{e,c}$. Let $f_{\ast}$ be the function from \eqref{2LHcslocWH}.
Inductively assume that $\sigma_i$, $i\leq \tuple{e,c}$ have been defined. At stage $\tuple{e,c}+1$
we 
ask if there exists $k$ such that
\[
\textrm{$g_e(i)\de$\ \  for all $i\leq k$
\hspace{0.4cm} and \hspace{0.4cm} 
$\sum_{i=0}^k 2^{-g_e(i)}>f_{\ast}(e,c,\nu_0)$.}
\]
 If not, we simply let $\sigma_{\tuple{e,c}+1}$ be
$\sigma_{\tuple{e,c}}\ast 0$. 
Otherwise we search for a proper extension $\nu$ of $\sigma_{\tuple{e,c}}$
of length at most $k$ such that
\begin{equation}\label{JJwTv34f2O}
\forall\mu\in 2^{|\nu|+g_e(|\nu|)}\ \ \ 
\Big(\Phi_e^{\mu}=\nu\Rightarrow K(\mu)< |\mu|-c\Big).
\end{equation}

By \eqref{2LHcslocWH} such a string $\nu$ exists. In this case we 
define $\sigma_{\tuple{e,c}+1}=\nu$. This completes the inductive definition of $(\sigma_i)$
and $X$.
An inspection of the construction
suffices to verify that only $\Sigma^0_1$ questions are asked, so $X\leq_T \emptyset'$.
Moreover for each $e,c$, condition $\mathcal{R}_{e,c}$ is met by all reals
extending $\sigma_0\ast\cdots\ast\sigma_{\tuple{e,c}+1}$. So 
the real $X=\cup_i\sigma_i$ meets $\mathcal{R}_{e,c}$ for all $e,c$.

\subsection{Proof of Theorem \ref{cdUMjzgrN}}\label{9VbXi7tSP7}
Suppose that $A$ is a generalized non-low$_2$ set.
We modify the construction of the previous section so as to build $X\leq_T A$.
The requirements to be satisfied are  
$\mathcal{R}_{e,c}$ as before. 
Recall that since $A$ is a generalized non-low$_2$,
for every $\Delta^0_2$ function $n\mapsto p(n)$ 
there exists a function 
$n\mapsto q(n)$ which is computable from $A$ and is not dominated
by $p$, i.e.\ such that there exist infinitely many $n$ for which $q(n)>p(n)$. 
The rough idea is the same as always when modifying constructions with oracle the halting set, in order to work below $A$ which is generalized non-low$_2$. One defines a function $p$ which is computable in the halting set, and which gives an upper bound for the length of computable search required at each stage of the construction in order to proceed `correctly'.  Then one shows that, in fact, it suffices to use $q$ which is not dominated by $p$ in order to bound the search at each stage.

\subsubsection{Dominating function and dynamics of strategies}\label{3RCakbwU1}
We first define a function $p\leq_{T} \emptyset'$ which is sufficiently fast growing
so that it provides good approximations to the conditions involved in 
$\mathcal{R}_{e,c}$.
Recall the definition of $f_{\ast}$ from \eqref{2LHcslocWH}. We may assume that $f_{\ast}(e,c,\nu_0)>|\nu_0|$. 

We let $p_0(e,c,\nu_0)$ be the least $k>|\nu_0|$ such that 
\begin{equation}\label{iFNsv7jHdK}
\textrm{$g_e(i)\de$ for all $i\leq k$ 
\hspace{0.5cm}and\hspace{0.5cm}  
$\sum_{i=0}^k 2^{-g_e(i)}>f_{\ast}(e,c,\nu_0)$}
\end{equation}
if there exists such, and we define $p_0(e,c,\nu_0)=0$ otherwise. Let $p_0(e,c,\nu_0)=k$. Then we define $p_1(e,c,\nu_0)$ to be the least $s>k$ such that: 
\begin{enumerate}[\hspace{0.5cm}(a)]
\item $g_e(i)[s]\downarrow$ for all $i\leq k$ if $k>0$. 
\item For all $\nu$ of length at most $k$ and $\mu$ of   length $|\nu|+g_e(|\nu|)$ such that 
$\Phi_e^{\mu}\downarrow =\nu$, we have  $\Phi_e^{\mu}[s]\downarrow =\nu$. 
 \item For all $\mu$ of length at most $k+g(k)$,  $K(\mu)$ has settled by stage $s$, \ie $K_s(\mu)=K(\mu)$.
\end{enumerate}

Finally, we define $p(s)$ to be the least number greater than $p_1(e,c,\nu_0)$ for all $e,c$ such that $\tuple{e,c}\leq s$ and $\nu_0$ of length at most $s$.

Clearly $p\leq_{T}\emptyset'$. 
Now fix  a function $q\leq_{T} A$ which is not dominated by $p$.

We are going to use $q$ in order to
construct $X$ which meets all requirements $\mathcal{R}_{e,c}$.
This will also be a finite extension construction, but it is important to ensure that
the length of $X$ that has been determined at stage $e$ is of length $e$.
This ensures that when we encounter some $e$ such that 
$q(e)\geq p(e)$, it will not be too late to make the right decision in terms of satisfying
some requirement of high priority that had remained unsatisfied in the previous stages.
At the start of each stage $s+1$ the initial segment $X\restr_s$ has been defined in the previous stages
and we are called to define $X(s)$, therefore specifying $X\restr_{s+1}$.

At stage $s+1$ we say that $\mathcal{R}_{e,c}$ {\em requires attention}
if it has not already been declared satisfied, and:
\begin{enumerate}[\hspace{0.5cm}(i)]
\item there exists a least $k>s$ such that: $k<q(s)$, for all $i\leq k$ we have $g_e(i)[q(s)]\downarrow$, and $\sum_{i=0}^k 2^{-g_e(i)} >f_{\ast}(e,c,X\restr_s)$;  
\item for this least $k$, there exists $\nu \supset X\restr_s$ of length $k$ which satisfies the following condition:  for all $\mu$ of length $k+g_e(k)$ such that $\Phi_e^{\mu}[q(s)]\downarrow =\nu$, $K_{q(s)}[\mu]<|\mu|-c$. 
\end{enumerate}

In this case we also say that $\mathcal{R}_{e,c}$ requires attention via $\nu$ for the lexicographically least $\nu$ satisfying the conditions of (ii) above.

\subsubsection{Construction of the real}\label{e7EemRzl2D}
At stage $0$: Define $X\restr_0 =\lambda$. 

At stage $s+1$: If there does not exist $\tuple{e,c}\leq s$ such that  $\mathcal{R}_{e,c}$ requires attention, then define $X(s)=0$. Otherwise, let  $\tuple{e,c}$ be that of highest priority, and let $\nu$ be such that  $\mathcal{R}_{e,c}$ requires attention via $\nu$. Define $X(s)=\nu(s)$. If $|\nu|=s+1$ then declare $\mathcal{R}_{e,c}$ to be satisfied. 

\subsubsection{Verification of the construction}\label{QUgXhTSL9}
Suppose that no requirement of higher priority than $\mathcal{R}_{e,c}$ requires attention at any stage $>s_0$. We show that $\mathcal{R}_{e,c}$ is satisfied, and that there exists a stage after which this requirement does not require attention. Let $s_1>s_0$ be such that $q(s_1)>p(s_1)$. From the definition of $p(s_1)$ and the fact that $q(s_1)>p(s_1)$ it follows that if $\mathcal{R}_{e,c}$ does not require attention at stage $s_1+1$ then either $g_e$ is not total, or else $g_e$ is total and $\sum_i 2^{-g_e(i)}$ is finite. If $g_e(i)\uparrow$ for some $i$,  then  $\mathcal{R}_{e,c}$ cannot require attention subsequent to stage $i$. If  $\sum_i 2^{-g_e(i)}$ is finite, then it follows directly from our assumption that $f_{\ast}(e,c,\nu_0)>|\nu_0|$ that $\mathcal{R}_{e,c}$ can only require attention at finitely many stages. So suppose, on the other hand, that $\mathcal{R}_{e,c}$ requires attention at stage $s_1+1$ via $\nu$. In this case the requirement will be declared satisfied by the end of stage $|\nu|$.

\subsection{Proof of Theorem \ref{2NY34dfIMG}}
Recall that given an array noncomputable set $A$ 
and any function $p$ which is 
weak truth-table computable in $\emptyset'$,
there exists a function $q\leq_T A$ which is not dominated
by $p$. So
if the function $p$ of Section \ref{3RCakbwU1} was computable from the halting problem with
{\em computable bound on the oracle use}, then we would have proved
Theorem \ref{cdUMjzgrN} under the weaker assumption of  array noncomputability.
Unfortunately, this is not the case. However by the same argument 
we can obtain a nonuniform version of
Theorem \ref{cdUMjzgrN}, under the weaker hypothesis of array noncomputability on the oracle $A$.
Let $A,g$ be as in the statement of Theorem \ref{2NY34dfIMG}, and let $e$ be an index of $g$.
We wish to construct $X\leq_T A$ which satisfies all requirements
$\mathcal{R}_{e,c}$, $c\in\Nat$ of Section \ref{q4yEFszH5d},  for the fixed index $e$ of $g$.
 The crucial point is that if we fix $e$ such that $g_e$ is total and follow the definition of 
 $p$ of Section \ref{3RCakbwU1} restricting to this fixed $e$, then the corresponding function $p_e$ is
  computable from the halting problem with
{\em computable bound on the oracle use}.
Hence, if $A$ is array noncomputable, we may choose an increasing function $q_e\leq A$ which is not
dominated by $p_e$.
Then the construction of Section \ref{e7EemRzl2D}, restricted to a fixed $e$ such that
$g_e=g$, gives a real $X\leq_T q_e$ which meets all requirements $\mathcal{R}_{e,c}, c\in\Nat$.
Again, the proof of this fact is the argument of Section \ref{QUgXhTSL9}, restricted to our fixed $e$.
Hence $X\leq_T  q_e\leq_T A$ and $X$ has the properties claimed in Theorem
\ref{2NY34dfIMG}.

\section{Conclusions}\label{TueOFoB6Hf}
Ku\v{c}era \cite{MR820784, Kucera:87}
and G{\'a}cs \cite{MR859105}
showed that every real is computable from a random real.
The best known general upper bound for the redundancy of such computations
is $\sqrt{n}\cdot \log n$, and is due to
G{\'a}cs \cite{MR859105} (Merkle and Mihailovi\'{c}
\cite{jsyml/MerkleM04} have provided a different proof of this fact).
In the present paper we asked for the optimal redundancy that can be achieved in the
\KG theorem. We showed that no computable nondecreasing function $g$
such that $\sum_i 2^{-g(i)}=\infty$ can be such an upper bound and demonstrated that
a large class of oracles require larger redundancy when they are computed by random reals.
This result improves the constant bound obtained
by Downey and Hirschfeldt \cite[Theorem 9.13.2]{rodenisbook}. 
Our result shows that, in general, the redundancy cannot be as slow growing as $\log n$, but
a large exponential gap with the currently known bound of $\sqrt{n}\cdot \log n$ remained. Recently
it was shown in \cite{optcod16} that the strict lower bounds that we obtain in the present paper are optimal.
In other words, any
computable nondecreasing function $g$
such that $\sum_i 2^{-g(i)}<\infty$ is a general upper bound on the redundancy in the computation of any real
from some \ml random oracle. This provides a complete characterization of the redundancy bounds in the \KG theorem.


\begin{thebibliography}{CDFT12}

\bibitem[BDM15]{JCSSbarmp15}
George Barmpalias, Rod Downey, and Michael McInerney.
\newblock Integer-valued betting strategies and {T}uring degrees.
\newblock {\em Journal of Computer and System Sciences}, 81:1387--1412, 2015.

\bibitem[BFLP16]{asybound}
George Barmpalias, Nan Fang, and Andy Lewis-Pye.
\newblock Optimal asymptotic bounds on the oracle use in computations from
  chaitin's omega.
\newblock {\em Journal of Computer and System Sciences}, 82:1283--1299, 2016.

\bibitem[BHPS16]{BHPSr}
Laurent Bienvenu, Rupert H\"{o}lzl, Cris Porter, and Paul Shafer.
\newblock Randomness and semi-measures.
\newblock {\em Notre Dame Journal of Formal Logic}, 2016.
\newblock In press.

\bibitem[BL07]{MR2286414}
George Barmpalias and Andrew E.~M. Lewis.
\newblock Randomness and the linear degrees of computability.
\newblock {\em Ann. Pure Appl. Logic}, 145(3):252--257, 2007.

\bibitem[BLP16]{optcod16}
George Barmpalias and Andy Lewis-Pye.
\newblock Optimal redundancy in computations from random oracles.
\newblock {P}reprint, {A}rXiv:1606.07910, 2016.

\bibitem[BST10]{cie/BienvenuST10}
Laurent Bienvenu, Frank Stephan, and Jason Teutsch.
\newblock How powerful are integer-valued martingales?
\newblock In {\em Programs, Proofs, Processes, 6th Conference on Computability
  in Europe, CiE 2010, Ponta Delgada, Azores, Portugal, June 30 - July 4, 2010.
  Proceedings}, pages 59--68, 2010.

\bibitem[BST12]{BSTmartin}
Laurent Bienvenu, Frank Stephan, and Jason Teutsch.
\newblock How powerful are integer-valued martingales?
\newblock {\em Theory of Computing Systems}, 51(3):330--351, 2012.

\bibitem[BV11]{BV2010}
George Barmpalias and Charlotte~S. Vlek.
\newblock Kolmogorov complexity of initial segments of sequences and
  arithmetical definability.
\newblock {\em Theoretical Computer Science}, 412(41):5656--5667, 2011.

\bibitem[CDFT12]{TeutChalcraft}
Adam Chalcraft, Randall Dougherty, Chris Freiling, and Jason Teutsch.
\newblock How to build a probability-free casino.
\newblock {\em Information and Computation}, 211:160--164, 2012.

\bibitem[DH10]{rodenisbook}
Rod~G. Downey and Denis Hirshfeldt.
\newblock {\em Algorithmic Randomness and Complexity}.
\newblock Springer, 2010.

\bibitem[DJS96]{DJS2}
Rod~G. Downey, Carl~G. Jockusch, Jr., and Michael Stob.
\newblock Array nonrecursive sets and genericity.
\newblock In {\em Computability, Enumerability, Unsolvability: Directions in
  Recursion Theory}, volume 224 of {\em London Mathematical Society Lecture
  Notes Series}, pages 93--104. Cambridge University Press, 1996.

\bibitem[Dot06]{cie/Doty06}
David Doty.
\newblock Every sequence is decompressible from a random one.
\newblock In {\em Logical Approaches to Computational Barriers, Second
  Conference on Computability in Europe, CiE 2006, Swansea, UK, June 30-July 5,
  2006, Proceedings}, pages 153--162, 2006.

\bibitem[G{\'a}c86]{MR859105}
P{\'e}ter G{\'a}cs.
\newblock Every sequence is reducible to a random one.
\newblock {\em Inform. and Control}, 70(2-3):186--192, 1986.

\bibitem[Her16]{herbertlow}
Ian Herbert.
\newblock Lowness for integer-valued randomness.
\newblock {\em Computability}, 5:103--109, 2016.

\bibitem[Ku{\v{c}}85]{MR820784}
Anton{\'{\i}}n Ku{\v{c}}era.
\newblock Measure, {$\Pi\sp 0\sb 1$}-classes and complete extensions of {${\rm
  PA}$}.
\newblock In {\em Recursion theory week (Oberwolfach, 1984)}, volume 1141 of
  {\em Lecture Notes in Math.}, pages 245--259. Springer, Berlin, 1985.

\bibitem[Ku{\v{c}}89]{Kucera:87}
Antonin Ku{\v{c}}era.
\newblock On the use of diagonally nonrecursive functions.
\newblock In {\em Logic Colloquium '87 (Granada, 1987)}, volume 129 of {\em
  Stud. Logic Found. Math.}, pages 219--239. North-Holland, Amsterdam, 1989.

\bibitem[LV97]{Li.Vitanyi:93}
Ming Li and Paul Vit{\'a}nyi.
\newblock {\em An introduction to {K}olmogorov complexity and its
  applications}.
\newblock Graduate Texts in Computer Science. Springer-Verlag, New York, second
  edition, 1997.

\bibitem[ML66]{MR0223179}
Per Martin-L{\"o}f.
\newblock The definition of random sequences.
\newblock {\em Information and Control}, 9:602--619, 1966.

\bibitem[MM04]{jsyml/MerkleM04}
Wolfgang Merkle and Nenad Mihailovi\'{c}.
\newblock On the construction of effectively random sets.
\newblock {\em J. Symb. Log.}, 69(3):862--878, 2004.

\bibitem[PB15]{Peretzagainst}
Ron Peretz and Gilad Bavly.
\newblock How to gamble against all odds.
\newblock {\em Games and Economic behavior}, 94:157--168, 2015.
\newblock {A}rxiv:1311.2109.

\bibitem[Per15]{Peretzwager}
Ron Peretz.
\newblock Effective martingales with restricted wagers.
\newblock {\em Information and Computation}, 245:152--164, 2015.
\newblock {A}rxiv 1301.7465.

\bibitem[Sch71a]{MR0354328}
Claus{-}Peter Schnorr.
\newblock A unified approach to the definition of random sequences.
\newblock {\em Math. Systems Theory}, 5:246--258, 1971.

\bibitem[Sch71b]{Schnorr:75}
Claus{-}Peter Schnorr.
\newblock {\em Zuf\"alligkeit und {W}ahrscheinlichkeit. {E}ine algorithmische
  {B}egr\"undung der {W}ahrscheinlichkeitstheorie}.
\newblock Springer-Verlag, Berlin, 1971.
\newblock Lecture Notes in Mathematics, Vol. 218.

\bibitem[Tad09]{Tadaki:1611072}
Kohtaro Tadaki.
\newblock Chaitin $\omega$ numbers and halting problems.
\newblock In {\em Proceedings of the 5th Conference on Computability in Europe:
  Mathematical Theory and Computational Practice}, CiE '09, pages 447--456,
  Berlin, Heidelberg, 2009. Springer-Verlag.

\bibitem[Teu14]{Teu14agCWGnp}
Jason Teutsch.
\newblock A savings paradox for integer-valued gambling strategies.
\newblock {\em International Journal of Game Theory}, 43(1):145--151, 2014.

\bibitem[ZL70]{MR0307889}
Alexander~K. Zvonkin and Leonid~A. Levin.
\newblock The complexity of finite objects and the basing of the concepts of
  information and randomness on the theory of algorithms.
\newblock {\em Uspehi Mat. Nauk}, 25(6(156)):85--127, 1970.

\end{thebibliography}
\end{document}